\journal{arXiv}
\newtheorem{theorem}{Theorem}
\newtheorem{lemma}[theorem]{Lemma}
\newtheorem{corollary}[theorem]{Corollary}
\newtheorem{proposition}[theorem]{Proposition}
\newdefinition{definition}{Definition}
\newdefinition{hypothesis}{Hypothesis}
\newdefinition{problem}{Problem}
\newdefinition{remark}{Remark}
\newdefinition{example}{Example}
\def\N{\boldsymbol{\mathbbm{N}}}
\def\R{\boldsymbol{\mathbbm{R}}}
\def\defining{\overset{\mathbf{def}}=}
\def\P{\mathcal{P}}
\def\O{\mathcal{O}}
\def\G{\mathcal{G}}
\def\Exp{\boldsymbol{\mathbb{E}}}
\def\prob{\boldsymbol{\mathbbm{P} } }
\def\Var{\mathbb{V}\mathrm{ar}}
\def\pt{p^{T}}
\def\x{\mathbf{x} }
\def\y{\mathbf{y} }
\def\z{\mathbf{z} }
\def\T{\mathcal{T}}
\def\p{\mathbf{p}}
\def\q{\mathbf{q}}
\def\r{\mathbf{r}}
\def\zero{\mathbf{0} }
\DeclareMathOperator*{\troot}{rt}
\DeclareMathOperator*{\mrf}{\Omega_{MRF}}
\DeclareMathOperator*{\irf}{\Omega_{IRF}}
\DeclareMathOperator*{\dist}{dist}
\DeclareMathOperator*{\inv}{inv}
\newcommand\lt{\mathop{}\!\mathrm{left} } 
\newcommand\rt{\mathop{}\!\mathrm{right} } 
\newcommand\X{ \mathsf{X} } 
\newcommand\Y{ \mathsf{Y} } 
\newcommand\Z{ \mathsf{Z} } 
\newcommand\uph{ \mathsf{H} } 
\def\eversor{\widehat{\mathbf{e} } }
\begin{document}
\begin{frontmatter}

\title{A Mathematical Assessment of the 
Isolation Random \\ Forest 
Method for Anomaly Detection in Big Data}
\tnotetext[mytitlenote]{This material is based upon work supported by project HERMES 45713 from Universidad Nacional de Colombia,
Sede Medell\'in.}

\author[mymainaddress]{Fernando A Morales} 
\cortext[mycorrespondingauthor]{Corresponding Author}
\ead{famoralesj@unal.edu.co}

\author[mysecondaryaddress]{Jorge M Ram\'irez}

\author[mymainaddress]{Edgar A Ramos}


\address[mymainaddress]{Escuela de Matem\'aticas
Universidad Nacional de Colombia, Sede Medell\'in \\
Carrera 65 \# 59A--110, Bloque 43, of 106,
Medell\'in - Colombia}

\address[mysecondaryaddress]{Computer Science and Mathematics Division, Oak Ridge National Laboratory, TN, USA}


\begin{abstract}
  We present the mathematical analysis of the Isolation Random Forest Method (IRF Method) for anomaly detection, introduced in \cite{LiuIRF} and \cite{Liu2012IsolationBasedAD}. We prove that the IRF space can be endowed with a probability induced by the Isolation Tree algorithm (iTree). In this setting, the convergence of the IRF method is proved, using the Law of Large Numbers. A couple of counterexamples are presented to show that the method is inconclusive and no certificate of quality can be given, when using it as a means to detect anomalies. Hence, an alternative version of the method is proposed whose mathematical foundation is fully justified. Furthermore, a criterion for choosing the number of sampled trees needed to guarantee confidence intervals of the numerical results is presented. Finally, numerical experiments are presented to compare the performance of the classic method with the proposed one.
\end{abstract}

\begin{keyword}
Isolation Random Forest, Monte Carlo Methods, Anomaly Detection, Probabilistic Algorithms
\MSC[2010]  65C05 \sep 68U01 \sep 68W20
\end{keyword}

\end{frontmatter}



%
%
%
%
%

\section{Introduction}\label{Sec Introduction}
%
%
Anomaly detection in big data is an important field of research due to its applications, the presence of anomalies may indicate disease of individuals, fraudulent transactions and network security breaches, among others. There is a remarkable number of methods for anomaly detection following different paradigms, some of these are  distance-based (see \cite{AngiulliPizzuti, BaySchwabacher, KnorrNg}), classification-based (see \cite{AbeZadronyLangford, ShiHorvath}), cluster-based (see \cite{HeXuXiaofeiDeng}), density-based (see \cite{CriminisiShotonKonukoglu}, \cite{RamGray}) and isolation-based (see \cite{LiuIRF, Liu2012IsolationBasedAD, BandaragodaIsolation, BandaragodaPaper}). In general, it is not possible to compare these methods from a unified point of view, because many of them were developed for particular types problems with data sets satisfying specific hypotheses, although some techniques are applicable to a broader class of problems, see \cite{ThudumuSurvey} and \cite{ChandolaSurvey} for comprehensive surveys on the field. 

In the present work we focus on the mathematical analysis of the Isolation Random Forest Method, from now on, denoted by IRF. Despite the popularity of the IRF method, to the authors' best knowledge, it has not been analyzed mathematically.
For instance, there is no rigorous proof that the method converges, there is no analysis about the number of iterations needed to guarantee confidence intervals for the computed values. Some scenarios where the method performs poorly have been pointed out in \cite{LiuIRF} and \cite{Liu2012IsolationBasedAD}, but there are no general recommendations/guidelines for a setting where the IRF Method runs successfully. In the present work, all these aspects are addressed with mathematical rigor. 

Due to its popularity, the IRF method has been analyzed from the empirical point of view and some of its deficiencies, detected experimentally, have been addressed and an enhanced with good results, from the empirical point of view. Following the Machine Learning modern point of view, most of these attempts, consider that one or more samples must be taken from the available data as a training set. The ultimate goal of such training is to build a probabilistic distribution locating the outliers according to a probabilistic distribution on the $ \R^{d} $ region the data containing the data; once the distribution is deduced the non-used data serve as testing instances. Some of the IRF method's aforementioned weaknesses were that it failed to detect clustered outliers (see \cite{LiuTingZhouSciForest}) and that for certain particular data configurations some ghost sub-regions in the probability distribution show up (see \cite{HaririKindBrunner}). In both works the authors coincide in modifying the data split criterion in which the isolation trees are constructed, as a way to amend the method's flaws. To that end, in \cite{LiuTingZhouSciForest} the authors propose a separation hyperplane for data splitting, which is computed using an empirical optimization process executed on a set of random oblique hyperplanes generated in every iteration of the algorithm presented algorithm (SciForest: Isolation Forest with Split-selection Criterion), the idea is similar to that in \cite{OlbiqueDecisionTrees}. This way, the randomness is preserved, while refined through a deterministic method of optimization. In the reference \cite{HaririKindBrunner}, the authors tackle the original IRF method's limitations with two approaches. In both cases the aim is to use random oblique hyerplanes instead of exclusively using axis-parallel hypeplanes as in the original method; unlike \cite{LiuTingZhouSciForest}, no optimization process is applied. Both works report that the novelty proposed methods outperform the original one. However, their results are established empirically in the same way that the flaws of the original IRF method were detected. Essentially, the mainstream is to keep enhancing the original method using experimental validation. We differ radically from this approach as we seek to understand the method through mathematical arguments of theoretical nature; later we use experiments to illustrate our results.

The paper is organized as follows, in the introductory section the notation and general setting are presented; the IRF Method is reviewed for the sake of completeness and a proof is given that the iTree algorithm is well-defined. \textsc{Section} \ref{Sec General Case} presents the analysis of the method in the general setting: the underlying probabilistic structure is stated, the convergence of the method is established, the cardinality of the isolation random forest (IRF) is presented and two examples are given to analytically prove the inconclusiveness of the IRF Method in its original form. In \textsc{Section} \ref{Sec 1D Setting}, IRF is analyzed for the 1D case and proved to be a suitable tool for anomaly detection in this particular setting. This is done recalling a closely related algorithm, next, estimates for the values of the expected height and variance are given; the section closes introducing an alternative version of method (the Direction Isolation Random Forest Method, DIRF) whose mathematical foundation is fully justified. Finally, \textsc{Section} \ref{Sec Numerical Experiments} presents numerical examples examining the performance of both methods IRF and DIRF.    
\begin{remark}[A note about the paper's organization]
	The Authors realize that the organization of the paper may rise some questions and disagreements, as it first visits the multidimensional case $ \R^{d} $ (\textsc{Section} \ref{Sec General Case}) and then analyzes the 1D case (\textsc{Section} \ref{Sec 1D Setting}). At first sight it may seem natural to exchange the order of these sections due to the level of generality they expose. The Authors did this exercise (privately) and concluded that such exposition would enlighten new features, but obscure some aspects they consider as priority in this work. The actual choice was made pursuing a balance between brevity, clarity and avoiding redundancy in the exposition. However, for a reader more interested in the mathematical techniques, than the analysis of the algorithms, it may be more beneficial to read first section \ref{Sec 1D Setting} and then section \ref{Sec General Case}. 
\end{remark}
%
%
%
\subsection{Preliminaries}\label{Sec Preliminaries of the IRF Method}
%
%
In this section the general setting and preliminaries of the problem are presented. We start introducing the mathematical notation. For any natural number $ N\in \N $, the symbol $ [N] \defining \{ 1, 2, \ldots , N \} $ indicates the set/window of the first $ N $ natural numbers. For any set $ E $ we denote by $ \# E  $, $ \wp(E) $ its cardinal and power set respectively. For any interval $ I \subseteq \R $ we denote by $ \vert I \vert $ its length. 
Random variables will be represented with upright capital letters, namely $ \X, \Y, \Z $ and its respective expectations with $ \Exp(\X),  \Exp(\Y), \Exp(\Z)$. Vectors are indicated with bold letters, namely $ \p, \q, \r ... $, etc. The canonical basis in $ \R^{d} $ is written $ \{ \eversor_{1}, \ldots, \eversor_{d}\} $, the projections from $ \R^{d} $ onto the $ j $-th coordinate are denoted by $ \pi_{j}(\x) \defining \x\cdot \eversor_{j} $ for all $ j \in [d] $, where $ \x \in \R^{d} $. 

The isolation random tree algorithm (iTree) for a set of points in $ \R^{d} $ is defined recursively as follows  
\begin{definition}[The iTree Algorithm, introduced in \cite{LiuIRF, Liu2012IsolationBasedAD}]\label{Def Isolation Random Tree}
	Let $ S \defining \{\x_{0}, \x_{1}, \ldots, \x_{N} \} $ be a set of points in $ \R^{d} $. 
	\begin{enumerate}[(i)]
		\item An \textbf{isolation random tree} $ T $ (iTree), associated to this set is defined recursively as follows:
		\begin{enumerate}[a.]
			\item Define the tree root as $ \troot(T) \defining S  $.
			
			\item Define the sets 
			\begin{align}\label{Def Random Split Coordinate}
			& \pi_{j} (S) \defining \{ \pi_{j}(\x):\x \in S\} , & 
			& 1 \leq j \leq d, &
			& \Omega_{C} \defining \{ j \in [d] : \#\pi_{j}(S) \geq 2  \} .
			\end{align}
			%
			
			\item If the set $ S $ has two or more points (equivalently, if $ \Omega_{C} \neq \emptyset $), choose randomly $ j $ in $ \Omega_{C} $ and next choose randomly $ p \in ( \min \pi_{j} (S), \max \pi_{j} (S) ) $ (the split value).
			
			\item Perform an isolation random tree on the \textbf{left set} of data $ S_{\lt} \defining \{ \x \in S: \x \cdot \eversor_{j} < p \} $, denoted by $ T_{\lt}$. Next, include the arc $ \big(\troot(T), \troot(T_{\lt})\big) $ in the edges of the tree $ E(T) $; where $ \troot(T_{\lt}) = S_{\lt} $ indicates the root of $ T_{\lt} $.
			
			\item Perform an isolation random tree on the \textbf{right set} of data $ S_{\rt} \defining \{ \x \in S: \x \cdot \eversor_{j} \geq p \} $, denoted by $ T_{\rt}$. Next, define the arc $ \big(\troot(T), \troot(T_{\rt})\big) $ in the edges of the tree $ E(T) $, where $ \troot(T_{\rt}) = S_{\rt} $ indicates the root of $ T_{\rt} $.
		\end{enumerate}
		\item We denote the set of all possible isolation random trees associated to the set $ S $ by $ \irf(S) $, whenever the context is clear, we simply write $ \irf $ and we refer to it as the \textbf{isolation random forest}.
	\end{enumerate}
	For the sake of completeness we present the iTree Algorithm's pseudocode in \ref{Alg iTree Algorithm}.
\end{definition}
%
%
%

%
\begin{algorithm} 
	\caption{Isolation Random Tree (introduced in \cite{LiuIRF, Liu2012IsolationBasedAD}), returns a rooted tree with $ N $ vertices. The vertices are subsets of the input data set
		$  S = \big\{\x_{n}: n\in [N] \big\} \subseteq \R^{d}$.}
	\label{Alg iTree Algorithm}
	\begin{algorithmic}[1]
		\Procedure{Isolation Random Tree}{Data Set $ S = \big\{ \x_{n}: n\in [N] \big\} $.}
		\State $ \text{VBT} = \emptyset $ \Comment{Initializing the Vertex set in the Binary Tree as empty} 
		\State $ \text{ABT} = \emptyset $ \Comment{Initializing the list of Arcs in the Binary Tree as empty} 
		\State $ \troot = S $ \Comment{Initializing the root as empty} 
		\State $ S \rightarrow \text{VBT} $ \Comment{Push the set $ S $ to \text{VBT} } 
		\Function{Branch}{$ S, \text{VBT}, \text{ABT}, \troot $}
		\If {$ \#\Omega_{C} = 1 $ } \Comment{Checking when to stop}
		\State \Return $ \text{VBT}, \text{ABT}$
		\Else
		\State choose $ j \in \Omega_{C} $ randomly \Comment{Choosing the split direction.}
		\State define $ \pi_{j} (S) \defining \big\{ \pi_{j}(\x): \x \in S \big\} $, \Comment{Data set projected onto the $ j $-th direction}
		\State choose $ p \in \big( \min \pi_{j} (S), \max \pi_{j} (S) \big) $ randomly \Comment{Choosing the split value.}
		\State define $ S_{\lt} \defining \big\{ \x \in S: \x \cdot \eversor_{j} < p \big\} $, 
		$ S_{\rt} \defining \big\{ \x \in S: \x \cdot \eversor_{j} \geq p \big\} $ \Comment{Left and Right subsets}
		\State $ S_{\lt} \rightarrow \text{VBT} $, $ (\troot, S_{\lt}) \rightarrow \text{ABT} $ \Comment{Push the set $ S_{\lt} $ to \text{VBT} and arc $ (\troot, S_{\lt}) $ to \text{ABT} }
		\State \textsc{Branch}$ \big(S_{\lt}, \text{VBT}, \text{ABT}, \troot \big) $
		\State $ S_{\rt} \rightarrow \text{VBT} $, $ (\troot, S_{\rt}) \rightarrow \text{ABT} $ \Comment{Push the set $ S_{\rt} $ to \text{VBT} and arc $ (\troot, S_{\rt}) $ to \text{ABT}}
		\State \textsc{Branch}$ \big(S_{\rt}, \text{VBT}, \text{ABT}, \troot \big) $
		\EndIf
		\EndFunction
		\EndProcedure
	\end{algorithmic}
\end{algorithm}
%

%
\begin{proposition}\label{Thm Well posedness of iTree}
	Given an arbitrary set $ S \defining \{\x_{0}, \x_{1}, \ldots, \x_{N} \} $ in $ \R^{d} $, the iTree algorithm described in \textsc{Definition} \ref{Def Isolation Random Tree} needs $ N $ instances to isolate every point in $ S $.
\end{proposition}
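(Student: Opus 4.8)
The plan is to argue by strong induction on the cardinality $\#S$, proving the sharper statement that the iTree algorithm of \textsc{Definition} \ref{Def Isolation Random Tree} performs exactly $\#S-1$ splitting steps on an input set $S$; specializing to $\#S=N+1$ then gives the asserted count of $N$. As a byproduct, the same induction shows that the recursion always bottoms out, so that the iTree (and hence $\irf(S)$) is a well-defined finite object — which is really the content of "well posedness" here. The quantity to track is $\nu(S)$, the number of recursive branchings on which a genuine split is performed, i.e. the number of internal nodes of the returned tree $T$.

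First I would isolate the one fact that needs care, and which is the main (if modest) obstacle: \emph{every split produces two nonempty children}. When $\Omega_C\neq\emptyset$ the algorithm picks $j\in\Omega_C$ and a split value $p$ in the \emph{open} interval $\big(\min\pi_j(S),\max\pi_j(S)\big)$, which is nonempty precisely because $\#\pi_j(S)\geq 2$. Any $\x\in S$ attaining $\pi_j(\x)=\min\pi_j(S)$ then lands in $S_{\lt}$ and any $\x$ attaining $\pi_j(\x)=\max\pi_j(S)$ lands in $S_{\rt}$; since $\{S_{\lt},S_{\rt}\}$ is a partition of $S$, this yields $\#S_{\lt}+\#S_{\rt}=\#S$ with $1\leq\#S_{\lt},\#S_{\rt}\leq\#S-1$. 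In particular both children are \emph{strictly} smaller than $S$, so the recursion is well-founded and terminates. I would also record that when $\#S\geq 2$ the (distinct) points of $S$ must differ in some coordinate, so $\Omega_C\neq\emptyset$; hence the stopping condition $\Omega_C=\emptyset$ is met exactly at singletons, and every leaf of $T$ is a one-point set.

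With this in hand the induction is routine. For $\#S=1$ we have $\Omega_C=\emptyset$, no split is performed, and $\nu(S)=0=\#S-1$. For $\#S=m\geq 2$, one split is performed and the induction hypothesis applies to the strictly smaller sets $S_{\lt},S_{\rt}$, so
\[
\nu(S)=1+\nu(S_{\lt})+\nu(S_{\rt})=1+(\#S_{\lt}-1)+(\#S_{\rt}-1)=\#S_{\lt}+\#S_{\rt}-1=m-1 .
\]
Since each of the $\#S$ points of $S$ ends up alone in a leaf and $T$ is built using exactly $\nu(S)=\#S-1$ splits, the algorithm isolates every point of $S$ in $\#S-1=N$ instances, which is the claim. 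Everything beyond the nonemptiness-of-children step is exactly this one-line computation.
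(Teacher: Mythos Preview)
Your proof is correct and follows essentially the same route as the paper: strong induction on $\#S$, splitting into $S_{\lt}$ and $S_{\rt}$, and using additivity of the split count via $1+(\#S_{\lt}-1)+(\#S_{\rt}-1)=\#S-1$. If anything, you are more careful than the paper in explicitly justifying that both children are nonempty and that $\Omega_C\neq\emptyset$ when $\#S\geq 2$.
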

\begin{proof}
	We proceed by induction on the number of data $ N $. For $ N = 1 $ the result is trivial. 
	
	Assume now that the result holds true for $ k \leq N - 1 $ and let $ S \defining \{\x_{0}, \x_{1}, \ldots, \x_{N} \} $ be arbitrary in $ \R^{d} $. Since $ \#S \geq 2 $, the set $ \Omega_{C} $ (defined in \eqref{Def Random Split Coordinate}) must be nonempty. Choose randomly an index $ j $ in $ \Omega_{C} $ and choose randomly $ p\in ( \min S_{j}, \max S_{j} ) $. Thus, after one instance of the algorithm the left and right subsets are defined and they satisfy 
	\begin{align*}
	& \#S_{\lt} < N , & 
	& \#S_{\rt} < N , & 
	& \#S_{\lt} + \# S_{\rt} = N .
	\end{align*}
	%
	Then, applying the induction hypothesis on each, the left and right subsets, it follows that the total of needed instances is
	\begin{equation*}
	1 +
	(\#S_{\lt} - 1) + 
	(\#S_{\rt} - 1)
	= \#S_{\lt} + \#S_{\rt} - 1 
	= N - 1 , 
	\end{equation*}
	which completes the proof.
\end{proof}
%
%
\begin{definition}[The IRF Method, introduced in \cite{LiuIRF, Liu2012IsolationBasedAD}]
	\label{Def Isolation Random Forest Method}
	Given an input data set $ S \defining \{\x_{1}, \ldots, \x_{N} \} \subseteq  \R^{d} $, a number of Bernoulli trials $ K $ and an anomaly threshold criterion. 
	\begin{enumerate}[(i)]
		%
		%
		
		\item For each Bernoulli trial $ k = 1, \ldots, K$, perform iTree on $ S $ (see \textsc{Defintion} \ref{Def Isolation Random Tree} and/or \textsc{Algorithm} \ref{Alg iTree Algorithm}) and store the heights $ \{h_{k}(\x): \x\in S\} $ in a Log.
		
		
		
		\item For each $ \x \in S $, define $ \uph_{\text{IRF}}(\x) $ as the average height of the collection of heights $ \{h_{k}(\x): k = 1, \ldots, K\} $.
		
		\item Declare as anomalies 
		$ A \defining \{\x \in S: \uph_{\text{IRF}}(\x) \text{ satisfies the anomaly threshold criterion}  \} $.
	\end{enumerate}
\end{definition}
Observe that due to \textsc{Proposition} \ref{Thm Well posedness of iTree} the iTree Algorithm and consequently the IRF method are well-defined. For brevity, we postpone until \textsc{Theorem} \ref{Thm IRF is a probability space} in \textsc{Section} \ref{Sec General Case}, the proof showing that the isolation random forest can be endowed with a probability measure, defined by the iTree algorithm.
\section{The General Setting}\label{Sec General Case}
%
%
%
%
This section presents the features of the isolation random forest that can be proved in general, these are: its probability structure, its cardinality and the fact that the IRF method converges and is well-defined. For the analysis of the general setting first we need to introduce a hypothesis
\begin{hypothesis}\label{Hyp Non Repeated Coordinates}
	Given a set of data $ S = \{\x_{0}, \ldots, \x_{N} \} \subseteq \R^{d} $, from now on it will be assumed that no coordinates are repeated i.e.
	\begin{align}\label{Eq Non Repeated Coordinates}
	& \#\pi_{j}(S) = N + 1 , & 
	& \text{for all } j = 1, \ldots, d .
	\end{align}
	Here $ \pi_{j}(S) $ is the $ j $-th projection of the set $ S $ as defined in Equation \eqref{Def Random Split Coordinate}.	
\end{hypothesis}
%
%
\begin{definition}\label{Def Projection Intervals}
	Let $ S = \{\x_{0}, \ldots, \x_{N} \} \subseteq \R^{d} $ be a data set satisfying Hypothesis \ref{Hyp Non Repeated Coordinates}. 
	\begin{enumerate}[(i)]
		\item For each $ j \in [d] $ denote by $ \P^{(j)} \defining \big\{ I^{(j)}_{n}: n \in [N] \big\} $, the family of intervals defined by sorting the points of the set $ \pi_{j}(S) = \big\{\x_{n}\cdot \eversor_{j}: n = 0, \ldots, N \big\} $.
		
		\item Define the \textbf{grid} of the set by $ G_{S} \defining \prod\limits_{j \, = \, 1}^{d} \pi_{j}(S) $.
		
	\end{enumerate}
	(See \textsc{Figure} \ref{Fig Potential Ancestors} (a) for an illustration when $ S \subseteq \R^{2} $.)
\end{definition}
\begin{figure}[h]
	\centering
	\begin{subfigure}[Grid $ G_{S} $, $ S \subseteq \R^{2} $ satisfies Hypothesis \ref{Hyp 2D Configuration}. A potential ancestor is delimited in thick line]
		{
			\begin{tikzpicture}
			[scale=.720,auto=left,every node/.style={}]
			\node (n1) at (0,0)  {$ \bullet $};
			\node (n2) at (0,3)  {$q_{1}$};
			\node (n3) at (0,4)  {$q_{2}$};
			\node (n4) at (0,7) {$q_{\ell}$};
			\node(n5) at (0,10) {$q_{N}$};
			\node(n6) at (1, 0) {$p_{1}$};
			\node(n7) at (3, 0) {$p_{2}$};
			\node(n8) at (6, 0) {$p_{k}$};
			\node(n9) at (10, 0) {$p_{N}$};
			\node (n10) at (10,3)  {};
			\node (n11) at (10,4)  {};
			\node (n12) at (10,7) {};
			\node(n13) at (10,10) {};
			\node(n14) at (1, 10) {};
			\node(n15) at (3, 10) {};
			\node(n16) at (6, 10) {};
			\node(n17) at (10, 10) {};
			\node(n18) at (6,7){$ \bullet $
			};

			\foreach \from/\to in {n1/n2, n2/n3, n3/n4, n4/n5, 
				n1/n6, n6/n7, n7/n8, n8/n9,
				n4/n18, n5/n17,
				n8/n18, n9/n17}
			\draw[line width = 1.0] (\from) -- (\to);  
			
			\foreach \from/\to  in{n18/n16, n18/n12, n6/n14, n7/n15,
				n2/n10, n3/n11}
			\draw[dashed, line width = 0.5] (\from) -- (\to);
			
			\end{tikzpicture}
		}
	\end{subfigure}
	\begin{subfigure}[Two possible alternatives (marked with $ \blacksquare $ and $\blacklozenge $) generating the potential ancestor drawn in the figure of the left. ]
		{
			\begin{tikzpicture}
			[scale=.720,auto=left,every node/.style={}]
			\node (n1) at (0,0)  {$ \bullet $};
			\node (n2) at (0,3)  {$q_{1}$};
			\node (n3) at (0,4)  {$q_{2}$};
			\node (n4) at (0,7) {$q_{\ell}$};
			\node(n5) at (0,10) {$q_{N}$};
			\node(n6) at (1, 0) {$p_{1}$};
			\node(n7) at (3, 0) {$p_{2}$};
			\node(n8) at (6, 0) {$p_{k}$};
			\node(n9) at (10, 0) {$p_{N}$};
			\node (n10) at (10,3)  {};
			\node (n11) at (10,4)  {};
			\node (n12) at (10,7) {};
			\node(n13) at (10,10) {};
			\node(n14) at (1, 10) {};
			\node(n15) at (3, 10) {};
			\node(n16) at (6, 10) {};
			\node(n17) at (10, 10) {};
			\node(n18) at (6,7){ $\ \blacksquare $};
			\node(n21) at (7.2,8){ $ \begin{pmatrix}
				p_{k} \\ q_{\ell}
				\end{pmatrix} \in S  $};
			\node(n19) at (3,7){$\ \blacklozenge $};
			\node(n22) at (4.2,8){$ \begin{pmatrix}
				p_{2} \\ q_{\ell}
				\end{pmatrix} \in S  $};
			\node(n20) at (6,3){$ \blacklozenge $};		
			\node(n23) at (7.2,2){$ \begin{pmatrix}
				p_{k} \\ q_{1}
				\end{pmatrix} \in S  $};

			\foreach \from/\to in {n1/n2, n2/n3, n3/n4, n4/n5, 
				n1/n6, n6/n7, n7/n8, n8/n9,
				n4/n19, n19/n18, n5/n17,
				n8/n20, n20/n18, n9/n17}
			\draw[line width = 1.0] (\from) -- (\to);  
			
			\foreach \from/\to  in{n18/n16, n18/n12, n6/n14, n7/n19, n19/n15,
				n2/n20, n20/n10, n3/n11}
			\draw[dashed, line width = 0.5] (\from) -- (\to);
			
			\end{tikzpicture} 
		}
	\end{subfigure}
	\caption{The figure (a) depicts the grid 
		$ G_{S} = \big\{0, p_{1}, \ldots, p_{N} \big\} \times 
		\big\{0, q_{1}, \ldots, q_{N}\big\} =  \{\x\cdot \eversor_{1}: \x \in S\} \times  \{\x\cdot \eversor_{2}: \x \in S\} $,
		of a particular set $ S $ satisfying Hypothesis \ref{Hyp 2D Configuration}. The corner $ p_{k}\eversor_{1} + q_{\ell}\eversor_{2} $ defines a potential ancestor, however it may or may not belong to $ S $.
		The figure (b) displays two possible ways to generate the potential ancestor of figure (a). First, when the point $ p_{k}\eversor_{1} + q_{\ell}\eversor_{2} $, marked with $ \blacksquare $ belongs to $ S $. A second option occurs when the couple of points $ p_{2}\eversor_{1} + q_{\ell}\eversor_{2} $, $p_{k}\eversor_{1} + q_{1}\eversor_{2} $, marked with $ \blacklozenge $ belong to $ S $. It is direct to see that there are $ (k -1)\times(\ell -1) + 1 $ possibilities to generate the potential ancestor at hand, but at most one of them is present in a given configuration/set.
	}
	\label{Fig Potential Ancestors}
\end{figure}
%
%
%
\begin{remark}\label{Rem Non Repeated Coordinates}
	%
	Observe that Hypothesis \ref{Hyp Non Repeated Coordinates} is mild because, it will be satisfied with probability one for any sample of $ N + 1 $ elements from $ \R^{d} $.
	%
\end{remark}
Next we prove that given a data set, its isolation random forest is a probability space.
\begin{theorem}\label{Thm IRF is a probability space}
	Let $ S \defining \{\x_{0}, \x_{1}, \ldots, \x_{N} \} \subseteq \R^{d}$ and let $ \irf $ be as in \textsc{Definition} \ref{Def Isolation Random Tree}. Then, the algorithm induces a probability measure in $ \irf $.
\end{theorem}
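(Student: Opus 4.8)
The plan is to construct the measure recursively, in lockstep with the recursion defining the iTree algorithm, and then to verify by induction on $\#S$ that the resulting set function has total mass one.

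First I would record two structural facts. By \textsc{Proposition} \ref{Thm Well posedness of iTree} the algorithm terminates after exactly $\#S-1$ splits, and any $T \in \irf(S)$ is completely determined by the nested partition of $S$ that it records; since $S$ is finite there are only finitely many such nested partitions, so $\irf(S)$ is a finite set. It therefore suffices to assign to each $T \in \irf(S)$ a number $\prob_S(\{T\}) \ge 0$, to check that $\sum_{T \in \irf(S)} \prob_S(\{T\}) = 1$, and then to extend $\prob_S$ additively to $\wp(\irf(S))$; the outcome is automatically a probability measure. The second fact is that every nontrivial $T \in \irf(S)$ decomposes uniquely: its root is $S$, and deleting the root leaves an unordered pair of subtrees $\{T_A, T_B\}$ rooted at the children $A,B$ of $S$, which form a partition $S = A \sqcup B$ with $1 \le \#A, \#B < \#S$; conversely a partition $S = A \sqcup B$ together with $T_A \in \irf(A)$ and $T_B \in \irf(B)$ reassembles into a unique element of $\irf(S)$, so $T \mapsto (\{A,B\}, T_A, T_B)$ is a bijection.

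Next I would define $\prob_S$ by recursion on $\#S$, keeping track of the intended meaning ``probability that one run of iTree on $S$ outputs $T$''. If $\#S = 1$ then $\Omega_{C} = \emptyset$, the algorithm halts at once, $\irf(S)$ is a singleton, and $\prob_S(\{T\}) \defining 1$. If $\#S \ge 2$ (so $\Omega_{C} \ne \emptyset$), take $T$ with root decomposition $(\{A,B\}, T_A, T_B)$; for $j \in \Omega_{C}$ let $w_j$ denote the probability the algorithm picks direction $j$ (so $\sum_{j \in \Omega_{C}} w_j = 1$) and let $\lambda_j(A,B) \in [0,1]$ denote the probability that a uniform split value $p \in (\min \pi_j(S), \max \pi_j(S))$ produces exactly the partition $\{A,B\}$. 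By Hypothesis \ref{Hyp Non Repeated Coordinates} the $\#S$ values of $\pi_j(S)$ are distinct, so $p$ lands almost surely in the interior of exactly one of the $\#S-1$ gaps between consecutive values; $\lambda_j(A,B)$ is then the length of the gap realizing $\{A,B\}$ divided by $\max\pi_j(S) - \min\pi_j(S)$, or $0$ if direction $j$ does not realize $\{A,B\}$. Since, conditioned on the first split being $\{A,B\}$, the algorithm runs two independent copies of iTree on $A$ and $B$, I would set
\begin{equation*}
\prob_S(\{T\}) \defining \Bigg( \sum_{j \in \Omega_{C}} w_j\, \lambda_j(A,B) \Bigg)\, \prob_{A}(\{T_A\})\, \prob_{B}(\{T_B\}),
\end{equation*}
the parenthesized factor being $\prob(\text{first split of } S \text{ is } \{A,B\})$ because the events $\{j \text{ chosen}\} \cap \{\text{split value realizes } \{A,B\}\}$ are disjoint over $j \in \Omega_{C}$.

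Finally I would show $\sum_{T \in \irf(S)} \prob_S(\{T\}) = 1$ by induction on $\#S$. The base case is immediate. For the step, I would use the bijection above to turn the sum over $\irf(S)$ into a sum over partitions $\{A,B\}$ and over $T_A \in \irf(A), T_B \in \irf(B)$, factor it as
\begin{equation*}
\sum_{\{A,B\}} \Bigg( \sum_{j \in \Omega_{C}} w_j\, \lambda_j(A,B) \Bigg)\Bigg( \sum_{T_A \in \irf(A)} \prob_{A}(\{T_A\}) \Bigg)\Bigg( \sum_{T_B \in \irf(B)} \prob_{B}(\{T_B\}) \Bigg),
\end{equation*}
replace the last two factors by $1$ via the induction hypothesis, swap the order of summation to obtain $\sum_{j \in \Omega_{C}} w_j \big( \sum_{\{A,B\}} \lambda_j(A,B) \big)$, and note that for fixed $j$ the numbers $\lambda_j(A,B)$ are precisely the normalized gap lengths of $\pi_j(S)$, hence sum to $1$; with $\sum_{j} w_j = 1$ this gives total mass $1$. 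The same construction is the pushforward, along the measurable map ``run iTree, return the tree'', of the product of the uniform laws governing the at most $\#S-1$ successive random choices. I expect the only delicate points to be exactly this identification — checking that the recursive formula above really computes that pushforward — the bookkeeping when several directions realize the same partition, and the appeal to Hypothesis \ref{Hyp Non Repeated Coordinates} ensuring the split value lands strictly inside a single gap.
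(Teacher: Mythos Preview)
Your argument is correct and follows essentially the same route as the paper: induction on $\#S$, a recursive formula for $\prob_S(\{T\})$ built from the first split, and the verification $\sum_T \prob_S(\{T\})=1$ by factoring out the two subtree sums and invoking the induction hypothesis. The only cosmetic difference is that the paper indexes the outer sum by the pair (direction $j$, gap $I_n^{(j)}$) rather than by the resulting partition $\{A,B\}$, and uses the uniform weight $w_j=1/d$ throughout; your formulation with $\sum_j w_j\,\lambda_j(A,B)$ simply groups together the $(j,n)$ that realize the same $\{A,B\}$, so the two computations coincide.
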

\begin{proof}
	We prove this theorem by induction on the cardinal of the set $ \#S $. For $ \#S  = N = 1 $ the only possible tree is the trivial one. 
	
	Assume now that the result is true for any data set satisfying Hypothesis \ref{Hyp Non Repeated Coordinates}, with cardinal less or equal than $ N $. Let $ S = \{ \x_{0}, \ldots, \x_{N} \} $ be a set and let $ T \in \irf $ be arbitrary, such that $ j \in [d] $ was the first direction of separation, with corresponding split value $ \pt\in (\min \pi_{j}(S), \max \pi_{j}(S)) $, $ T_{\lt}, T_{\rt} $ left, right subtrees and $ S_{\lt} , S_{\rt}$ left and right sets (as in Definition \ref{Def Isolation Random Tree}). Suppose that $ \pt $ belongs to the interval $ I^{(j)}_{n} $ then, the probability that $ T $ occurs, equals the probability of choosing the direction $ j \in [d] $, times the probability of choosing $ I^{(j)}_{n} $ among $ \P^{(j)} $, times the probability that $ T_{\alpha} $ occurs in $ S_{\alpha} $ when $ \alpha\in \{\lt, \rt\} $, i.e.
	\begin{equation}\label{Eq IRT Recursive Probability}
	\prob(T) = \frac{1}{d}
	\frac{\vert I_{n}^{(j)} \vert}{\sum\{ \vert I \vert: I \in \P^{(j)} \}} 
	\prob_{\lt}(T_{\lt} )
	\prob_{\rt}(T_{\rt}) .
	\end{equation}
	Here $ \prob_{\!\alpha}(T_{\alpha} ) $ indicates the probability that $ T_{\alpha} $ occurs in the space of isolation random trees defined on the sets $ S_{\alpha} $, for $ \alpha \in \{\lt, \rt\} $; which is well-defined since $ \#S_{\alpha} \leq N $. Denoting by $ \irf(\P_{\alpha}) $ the space of isolation random trees defined on the set $ S_{\alpha} $, by the induction hypothesis we know that $ \prob_{\alpha}: \irf(S_{\alpha}) \rightarrow [0,1] $ is a well-defined probability, then $ \prob_{\!\alpha}(T_{\alpha} ) $ for $ \alpha \in \{\lt, \rt\} $ are nonnegative and consequently $ \prob(T) $ is nonnegative. Next we show that $ \sum\{\prob(T) : T \in \irf \}= 1 $. Consider the following identities
	\begin{equation}\label{Eq first equality Propability Space IRF}
	\begin{split}
	\sum_{T \in \, \irf} \prob(T) 
	& = \sum_{j \, = \, 1}^{d} 
	\sum_{\substack{T \in \, \irf\\ \pt \in \,( \min \pi_{j}(S), \max \pi_{j}(S) ) } } \prob(T) 
	= \sum_{j \, = \, 1}^{d}
	\sum_{n \, = \, 1}^{N}
	\sum_{\substack{T \in \, \irf\\ \pt \in \,  I_{n}^{(j)} } } \prob(T) \\
	& = \sum_{j \, = \, 1}^{d}
	\sum_{n \, = \, 1}^{N}
	\sum_{\substack{T \in \, \irf\\ \pt \in \,  I_{n}^{(j)} } }
	\frac{1}{d} \frac{\vert I_{n}^{(j)} \vert}{\sum\{ \vert I \vert: I \in \P^{(j)} \}} 
	\prob_{\lt}(T_{\lt} )
	\prob_{\rt}(T_{\rt}) \\
	& = \sum_{j \, = \, 1}^{d}
	\sum_{n \, = \, 1}^{N}
	\frac{1}{d} \frac{\vert I_{n}^{(j)} \vert}{\sum\{ \vert I \vert: I \in \P^{(j)} \}} 
	\sum_{\substack{T \in \, \irf\\ \pt \in \,  I_{n}^{(j)} } } 
	\prob_{\lt}(T_{\lt} )
	\prob_{\rt}(T_{\rt})
	\end{split}
	\end{equation}
	The sum nested in the third level can be written in the following way
	\begin{equation*}
	\begin{split}
	\sum_{\substack{T \in \, \irf\\ \pt \in I_{n}^{(j)}}} 
	\prob_{\lt}(T_{\lt} )
	\prob_{\rt}(T_{\rt})
	& = \sum_{\substack{T_{\lt} \in \, \irf(S_{\lt})\\T_{\rt} \in \, \irf(S_{\rt})}} 
	\prob_{\lt}(T_{\lt} )
	\prob_{\rt}(T_{\rt}) \\
	& = 
	\sum_{T_{\lt} \in \, \irf(S_{\lt})} 
	\sum_{T_{\rt} \in \, \irf(S_{\rt})}
	\prob_{\lt}(T_{\lt} )
	\prob_{\rt}(T_{\rt}) \\
	& = \sum_{T_{\lt} \in \, \irf(S_{\lt})} 
	\prob_{\lt}(T_{\lt} )
	\sum_{T_{\rt} \in \, \irf(S_{\rt})}
	\prob_{\rt}(T_{\rt}) .
	\end{split}
	\end{equation*}
	Due to the induction hypothesis, each factor in the last term equals to one. Replacing this fact in the expression \eqref{Eq first equality Propability Space IRF} we get 
	\begin{equation*}
	\begin{split}
	\sum_{T \in \, \irf} \prob(T) 
	& 
	=
	\sum_{j \, = \, 1}^{d}
	\sum_{n \, = \, 1}^{N}
	\frac{1}{d} \frac{\vert I_{n}^{(j)} \vert}{\sum\{ \vert I \vert: I \in \P^{(j)} \}} 
	= \sum_{j \, = \, 1}^{d}
	\frac{1}{d} \frac{1}{\sum\{ \vert I \vert: I \in \P^{(j)} \}}
	\sum_{n \, = \, 1}^{N} \vert I_{n}^{(j)} \vert
	= \sum_{j \, = \, 1}^{d} \frac{1}{d} = 1,
	\end{split}
	\end{equation*}
	which completes the proof.
\end{proof}
\begin{corollary}\label{Thm the IRF Method Converges}
	Let $ S \defining \{\x_{0}, \x_{1}, \ldots, \x_{N} \} \subseteq \R^{d}$ and let $ \irf $ be as in \textsc{Definition} \ref{Def Isolation Random Tree}. Given a sequence of random iTree algorithm experiments (or Bernoulli trials), denoted by $ (T_{n})_{n \in \N} $ and let $ \big(\uph_{\x}(T_{n})\big)_{n\in \N} $ be the sequence of the corresponding depths for the point $ \x \in S $. Then,
	\begin{equation}\label{Stmt the IRF Method Converges}
	\frac{\uph_{\x}(T_{1}) + \uph_{\x}(T_{2}) + \ldots + \uph_{\x}(T_{n}) }{n}\xrightarrow[n\rightarrow \infty]{} \Exp(\uph_{\x}) .
	\end{equation}
	In particular, the IRF method converges and it is well-defined.
\end{corollary}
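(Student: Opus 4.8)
The plan is to recognize the statement as a direct application of the Strong Law of Large Numbers (SLLN). By \textsc{Theorem} \ref{Thm IRF is a probability space}, the isolation random forest $\irf(S)$ carries a well-defined probability measure $\prob$, and since $S$ is finite, $\irf(S)$ is a finite set; hence every random variable on it, in particular the height $\uph_{\x}\colon \irf(S)\to\N$, is bounded and therefore integrable, so $\Exp(\uph_{\x})$ is finite and well-defined.

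The sequence of Bernoulli trials $(T_{n})_{n\in\N}$ consists of independent draws from $\irf(S)$, each distributed according to $\prob$; consequently $\big(\uph_{\x}(T_{n})\big)_{n\in\N}$ is a sequence of i.i.d. random variables with common mean $\Exp(\uph_{\x})$ and finite variance (again by boundedness). The first step is to make these independence and identical-distribution claims explicit for the product probability space on which the repeated iTree experiments live. The second step is then to invoke the SLLN (Kolmogorov's version suffices, or even the simpler $L^{2}$ law given the finite variance) to conclude
\begin{equation*}
\frac{\uph_{\x}(T_{1}) + \uph_{\x}(T_{2}) + \ldots + \uph_{\x}(T_{n})}{n}\xrightarrow[n\rightarrow \infty]{} \Exp(\uph_{\x})
\end{equation*}
almost surely, which is exactly \eqref{Stmt the IRF Method Converges}. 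Finally, to obtain the last sentence of the statement, I would note that the left-hand side is precisely $\uph_{\text{IRF}}(\x)$ from \textsc{Definition} \ref{Def Isolation Random Forest Method} with $K = n$ trials; thus as $K\to\infty$ the quantity computed by the IRF method converges to the deterministic limit $\Exp(\uph_{\x})$, so the method is well-defined (its output stabilizes) and convergent.

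I do not expect any serious obstacle here: the only point requiring a modicum of care is the passage from "a single iTree experiment is governed by $\prob$ on $\irf(S)$" to "a sequence of such experiments is governed by the countable product measure $\prob^{\otimes\N}$," so that the $T_{n}$ are genuinely independent — but this is the standard construction of an i.i.d. sequence and needs only to be cited. Everything else (finiteness of $\irf(S)$, boundedness hence integrability of $\uph_{\x}$) follows immediately from \textsc{Proposition} \ref{Thm Well posedness of iTree} and \textsc{Theorem} \ref{Thm IRF is a probability space}.
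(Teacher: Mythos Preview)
Your proposal is correct and follows exactly the paper's approach: the paper's entire proof is the single sentence ``It is a direct consequence of the Law of the Large Numbers, see \cite{BillingsleyProb}.'' You have simply (and appropriately) spelled out the verifications---finiteness of $\irf(S)$, boundedness and hence integrability of $\uph_{\x}$, and the i.i.d.\ structure of the trials---that the paper leaves implicit.
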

\begin{proof}
	It is a direct consequence of the Law of the Large Numbers, see \cite{BillingsleyProb}.
\end{proof}
Next we present the cardinal of the space $ \irf $.
\begin{theorem}[Cardinal of the Isolation Random Forest]\label{Thm Cardinal of RIF}
	Let $ S \defining \{\x_{0}, \x_{1}, \ldots, \x_{N} \} \subseteq \R^{d}$ and $ \irf $ be as in \textsc{Hypothesis} \ref{Def Isolation Random Tree}, then
	\begin{align}\label{Eq Cardinal of RIF}
	& \# \irf (S) \equiv \frac{1}{N}{2(N -1) \choose N -1} d^{N -1} = 
	C_{N-1} d^{N - 1}, &
	& \forall N \geq 1 .
	\end{align}
	Here $ C_{N - 1} $ denotes the $ N -1 $ Catalan number. 
\end{theorem}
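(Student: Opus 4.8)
The plan is to reduce the enumeration to a recurrence in $\#S$ and then recognise the Catalan generating function. I will prove the slightly more flexible statement that $\#\irf(S')=C_{m-1}d^{m-1}$ for every data set $S'$ of cardinality $m\ge1$ satisfying Hypothesis \ref{Hyp Non Repeated Coordinates}, which is the theorem with $m=\#S$; throughout I use that Hypothesis \ref{Hyp Non Repeated Coordinates} is inherited by every nonempty subset, so $\Omega_{C}=[d]$ at every node with at least two points.

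First I would set up the recurrence. Write $a_m\defining\#\irf(S')$ for $\#S'=m$, and argue that this depends only on $m$ by strong induction, deriving the recurrence along the way. For $m=1$ the only iTree is the trivial one, so $a_1=1$. For $m\ge2$, classify an iTree $T\in\irf(S')$ by the split at $\troot(T)=S'$: this is the choice of a direction $j\in[d]$ together with a value $p\in(\min\pi_j(S'),\max\pi_j(S'))$. Since only the interval $I^{(j)}_k\in\P^{(j)}$ containing $p$ affects the partition, and since an iTree carries at each internal node the record of its split direction and of the interval containing its split value, this choice is equivalent to a pair $(j,k)$ with $j\in[d]$ and $k\in\{1,\ldots,m-1\}$, where $S'_{\lt}$ consists of the $k$ points of $S'$ smallest in the $j$-th coordinate and $S'_{\rt}$ of the remaining $m-k$. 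The rest of $T$ is then an arbitrary pair $(T_{\lt},T_{\rt})\in\irf(S'_{\lt})\times\irf(S'_{\rt})$, and $T\mapsto(j,k,T_{\lt},T_{\rt})$ is a bijection onto all such tuples. Since $\#S'_{\lt}=k<m$ and $\#S'_{\rt}=m-k<m$, the inductive hypothesis gives $\#\irf(S'_{\lt})=a_k$ and $\#\irf(S'_{\rt})=a_{m-k}$, whence
\[
a_m \;=\; d\sum_{k=1}^{m-1}a_k\,a_{m-k},\qquad m\ge2 ,
\]
which depends only on $m$ and closes the induction.

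Next I would solve the recurrence. The generating function $A(x)\defining\sum_{m\ge1}a_mx^m$ satisfies $A(x)=x+d\,A(x)^2$, whose unique power-series solution vanishing at the origin is $A(x)=\tfrac{1}{2d}\bigl(1-\sqrt{1-4dx}\bigr)$. Since the Catalan generating function is $C(x)=\sum_{n\ge0}C_nx^n=\tfrac{1}{2x}\bigl(1-\sqrt{1-4x}\bigr)$, one gets $A(x)=x\,C(dx)=\sum_{m\ge1}C_{m-1}d^{m-1}x^m$, hence $a_m=C_{m-1}d^{m-1}$, and $C_{m-1}=\tfrac1m\binom{2(m-1)}{m-1}$ yields the stated closed form. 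Alternatively one may avoid generating functions and check directly that $a_m=C_{m-1}d^{m-1}$ satisfies the recurrence, using the Catalan convolution $C_n=\sum_{i=0}^{n-1}C_iC_{n-1-i}$ with $n=m-1$ and the base value $a_1=C_0=1$.

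The step needing the most care is the bijection above. One must be explicit that two split values lying in the same interval $I^{(j)}_k$ produce the same iTree — so the enumeration is over intervals, not over split values — and that the split direction is part of the data attached to each internal node, so that two distinct directions inducing the same partition of $S'$ nonetheless give distinct iTrees (already visible for $m=2$, where the correct count is $d$ and not $1$). Once this bookkeeping and the size-only dependence of $\#\irf$ are secured, the recurrence and its resolution are routine.
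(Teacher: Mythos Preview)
Your proposal is correct and follows essentially the same route as the paper: both derive the recurrence $a_m=d\sum_{k=1}^{m-1}a_k a_{m-k}$ by classifying trees according to the root split $(j,k)$, pass to the generating function satisfying $A(x)=x+d\,A(x)^2$, and solve to obtain $A(x)=\tfrac{1}{2d}(1-\sqrt{1-4dx})$. The only cosmetic differences are that the paper extracts coefficients via the generalized binomial theorem rather than by recognising $A(x)=x\,C(dx)$, and that you are more explicit about the bijection and offer the alternative direct verification via the Catalan convolution.
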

\begin{proof}
	Let $ t_{i} $ be the number of all possible isolation trees on $ i $ data, with the artificial convention $ t_{0} = 0 $. It is direct to see that $ t_{1} = 1 $. Then, repeating the reasoning used to derive the expression \eqref{Eq first equality Propability Space IRF} the following recursion follows
	\begin{equation*} 
	\begin{split}
	t_{N+ 1}
	= 
	\sum_{T\, \in\, \irf(S)} 1
	&= \sum_{j\, = \, 1}^{d} \sum_{n\, = \, 1}^{N} \sum_{ \substack{T \, \in\, \irf\\ \pt \in \, I_{n}^{(j)} } } 1 .
	\end{split}
	\end{equation*}
	Notice that if $ \pt \in I_{n}^{(j)} $ then $ \#S_{\lt} = n $ and $ \#S_{\rt} = N + 1 - n $. Therefore, the sum $ \sum\{ 1 : T \, \in\, \irf , \,\pt \in \, I_{n}^{(j)} \} $ counts all the possible trees on $ S_{\lt} $, times the number of trees on $ S_{\rt} $; whose cardinals are $ t_{n} $ and $ t_{N + 1 - n} $ respectively. Replacing the latter in 
	the expression above, we have
	\begin{align}\label{Eq Dimensional Catalan Recursive 1}
	& t_{N + 1}
	= \sum_{j\, = \, 1}^{d} \sum_{n\, = \, 1}^{N} t_{n}t_{N + 1 - n} = 
	d\sum_{n\, = \, 1}^{N} t_{n}t_{N + 1 - n}
	= 
	d\sum_{n\, = \, 0}^{N + 1} t_{n}t_{N + 1 - n} , &
	& \forall N \in \N .
	\end{align}
	Let $ g(x) \defining \sum\limits_{i \, \geq \,1 } t_{i}x^{i} $ be the generating function of the sequence $ (t_{i})_{i \,\geq \, 1} $, then the relation $ d g^{2}(x) + x = g(x) $ holds which, solving for $ g(x) $ and recalling that $ g(0) = t_{0} = 0 $ gives
	\begin{equation*} 
	g(x) = \frac{1 - \sqrt{1 - 4dx}}{2d} .
	\end{equation*}
	The generalized binomial theorem states
	\begin{equation*} 
	g(x) = \frac{1}{2d}\Big(1- \sum_{k \, \geq \, 0} {1/2 \choose k}(-4dx)^{k} \Big) =
	\frac{1}{d} \sum_{k \, \geq \, 1} {1/2 \choose k}(-1)^{k + 1}2^{2k - 1}d^{k}x^{k} .
	\end{equation*}
	Recalling that 
	\begin{equation*}
	{1/2 \choose k} = \frac{(-1)^{k -1}}{2^{k}}\frac{1\cdot 3\cdot \ldots \cdot (2k-3)}{1\cdot 2 \cdot \ldots \cdot k},
	\end{equation*}
	we conclude that
	\begin{equation*}
	t_{k} = \frac{1}{k}{2(k -1)\choose k-1}d^{k -1} .
	\end{equation*}
	The above concludes the proof.
\end{proof}
%
%
%
%
\subsection{The Inconclusiveness of the Expected Height.}\label{Sec Height Inconclusiveness}
%
%
In the present section, it will be seen that the expectation of the depth, depending on the configuration of the points, can have different topological meanings when working in multiple dimensions. This is illustrated with two particular examples in 2D. Before presenting them some context needs to be introduced
\begin{hypothesis}[Adopted in the section \ref{Sec Height Inconclusiveness} only]\label{Hyp 2D Configuration}
	The data set $ S \subseteq \R^{2} $ satisfies
	\begin{enumerate}[(i)]
		\item All the data are contained in the first quadrant of the plane.
		
		\item The set $ S $ contains the origin $ \zero $.
		
		\item The set $ S $ verifies the hypothesis \ref{Hyp Non Repeated Coordinates} of \textsc{Section} \ref{Sec General Case}.
		
	\end{enumerate}
\end{hypothesis}
From now on we concentrate on analyzing the depth of the origin $ \zero $ in $ \irf $. Appealing to the regular terminology of rooted trees, a node $ u $ is the ancestor of another node $ v $, if $ u $ lies in the unique path joining the root of the tree and $ v $. In the case of isolation trees the nodes are subsets of $ S $, ergo given an isolation tree $ T $, the subsets of $ S $ in the path joining $ \zero $ with the root are all ancestors within $ T $. We say a subset $ A $ of $ S $ is a \textbf{potential ancesto}r of $ \zero $ if there exists an iTree for which $ A $ is ancestor of the origin.   
Notice that due to Hypothesis \ref{Hyp 2D Configuration}, the potential ancestors of $ \zero $ have the structure $ A = S \cap R $, where $ R \subseteq \R^{2}$ is a rectangle whose edges are parallel to the coordinate axes, see \textsc{Figure} \ref{Fig Potential Ancestors} (a). Given that infinitely many rectangles satisfy this conditions we consider $ R_{A}\defining \bigcap \{R: A = R \cap S \text{ and } R \text{ is a rectangle} \} $. Now, $ R_{A} $ can be identified with its upper right corner, moreover, given a set $ S $ with associated grid $ G_{S} = \big\{0, p_{1}, \ldots, p_{N} \big\} \times 
\big\{0, q_{1}, \ldots, q_{N}\big\}  $, we denote a potential ancestor by $ [p_{i}, q_{j}]  \defining \{\x\in S: \x \cdot \eversor_{1} \leq p_{i}, \x\cdot \eversor_{2} \leq q_{j} \} $, see \textsc{Figure} \ref{Fig Potential Ancestors}. Notice that, depending on the configuration of $ S $, not every element of $ G_{S} $ defines a potential ancestor, also observe that different configurations/sets may have an ancestor identified by the same pair, as it is the case of $ [p_{k}, q_{\ell}] $ in \textsc{Figure} \ref{Fig Potential Ancestors} (a) and (b). Finally, we introduce the following indicator function
\begin{equation*}
\X_{[p_{i}, q_{j}]} \defining \begin{cases}
1 , & [p_{i}, q_{j}] \text{ is ancestor of } \zero, \\
0 , & \text{otherwise} .
\end{cases}
\end{equation*}
It is direct to see that the depth of the origin satisfies $ \uph_{\zero} = \sum \big\{\X_{[p_{i}, q_{j}]}: [p_{i}, q_{j}] \text{ is a potential ancestor of } \zero \big\}$, as the sum counts the number of ancestors of $ \zero $ in a bijective fashion.
\begin{lemma}[Inconclusiveness of IRF in 2D]\label{Thm Inconclusiveness in 2D}
	Given a set $ S = \{\x_{0}, \x_{1}, \ldots, \x_{N}\} $ in $ \subseteq \R^{2} $, the topological meaning of the expected height, 
	found by the IRF method is inconclusive and no general quality certificate can be established for the method.
\end{lemma}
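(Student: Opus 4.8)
The approach I would take is to argue by counterexample. Because the IRF method classifies a point $\x \in S$ exclusively through the number $\uph_{\text{IRF}}(\x)$, which by \textsc{Corollary} \ref{Thm the IRF Method Converges} converges almost surely to $\Exp(\uph_{\x})$, it suffices to exhibit two admissible data sets $S^{(1)}, S^{(2)} \subseteq \R^{2}$ --- both satisfying \textsc{Hypothesis} \ref{Hyp 2D Configuration} --- for which the two expected heights of the origin coincide (or, even more damningly, are ordered opposite to what the isolation heuristic would predict) while the origin is, by any reasonable geometric reading, a clear outlier of $S^{(1)}$ and an utterly unremarkable point of $S^{(2)}$. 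Once such a pair is in hand, no anomaly-threshold criterion applied to $\uph_{\text{IRF}}$ can separate the two situations, so the expected height carries no intrinsic topological meaning and no universal certificate of correctness can be attached to the method's output --- which is exactly the assertion of the lemma.

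The computational device is the representation derived just above the statement, namely $\uph_{\zero} = \sum \{ \X_{[p_{i}, q_{j}]} : [p_{i}, q_{j}] \text{ is a potential ancestor of } \zero \}$, which yields $\Exp(\uph_{\zero}) = \sum \prob\big([p_{i}, q_{j}] \text{ is an ancestor of } \zero\big)$. Each of these probabilities is computed by unrolling the recursion of \textsc{Theorem} \ref{Thm IRF is a probability space}: at every node the split direction $j \in \{1,2\}$ is drawn with probability $1/2$ and the split value is uniform on the surviving coordinate range, so the probability that a prescribed chain of cuts keeps $\zero$ together with a given ancestor until $\zero$ is isolated is a finite product of interval-length ratios, and $\Exp(\uph_{\zero})$ is the sum of these products over all admissible chains. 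Placing the points of $S^{(1)}$ and $S^{(2)}$ on explicit grids with prescribed gap lengths turns these sums into closed-form expressions, or at least into quantities that can be estimated to the order in $N$ the argument needs.

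Concretely, for $S^{(1)}$ I would take the origin together with $N$ points forming a descending ``staircase'' that crowds toward $\zero$ (for instance with successive coordinate gaps decaying geometrically), engineered so that $\zero$ has $\Theta(N)$ genuine potential ancestors, each of which survives with a probability bounded below uniformly in $N$; this forces $\Exp(\uph_{\zero})$ to be large even though $\zero$ sits far outside the bulk of $S^{(1)}$ and is geometrically an obvious anomaly. For $S^{(2)}$ I would take $\zero$ as an ordinary corner of a tight, essentially uniform cloud (a small perturbed grid translated into the first quadrant so as not to violate \textsc{Hypothesis} \ref{Hyp Non Repeated Coordinates}), for which the very first cut isolates $\zero$ with probability close to $1$ and hence $\Exp(\uph_{\zero})$ is small, although $\zero$ is in no reasonable sense an anomaly of $S^{(2)}$. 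Tuning the gap sizes, the number of points, and a final ``companion'' point placed near $\zero$ in $S^{(1)}$, one arranges the two expected heights to agree (or to reverse the intuitive order); the two explicit configurations and their exact computations would be presented in the examples that follow.

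The step I expect to be the main obstacle is precisely this exact evaluation --- or sufficiently sharp estimation --- of $\Exp(\uph_{\zero})$ for the chosen configurations. Since the split value is drawn from a continuum, each ancestor probability is a product and the height is a sum over \emph{all} admissible cut sequences, so the two data sets must be designed with care --- grid positions, number of points, decay of the gaps --- so that these sums stay tractable while still admitting an unambiguous reading of ``anomaly'' versus ``typical point.'' Once both expected heights are known, the contradiction with any fixed threshold rule, and therefore the claimed inconclusiveness and the impossibility of a general quality certificate, follows immediately.
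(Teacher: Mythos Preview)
Your overall strategy --- exhibit two planar configurations for which $\Exp(\uph_{\zero})$ carries contradictory geometric information --- is the paper's, but the execution you sketch is substantially harder than the one actually used, and the ``main obstacle'' you flag is precisely what the paper's construction sidesteps.

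The paper does not build a staircase-versus-cloud pair and then tune parameters until the two expected heights agree. Instead it takes two sets $S_{1}, S_{2}$ sharing the \emph{same} grid $\{0, p_{1}, \ldots, p_{N}\} \times \{0, q_{1}, \ldots, q_{N}\}$: in $S_{1}$ the points lie on the diagonal, $\x_{i} = (p_{i}, q_{i})$, while $S_{2}$ differs only by a single transposition, $\y_{1} = (p_{1}, q_{2})$, $\y_{2} = (p_{2}, q_{1})$, and $\y_{i} = \x_{i}$ for $i \geq 3$. Because the grids coincide, the ancestor decomposition $\uph_{\zero} = \sum \X_{[p_{i}, q_{j}]}$ involves almost the same list of terms in both cases, and a short direct computation of each $\Exp(\X_{[p_{i}, q_{j}]})$ shows that the two values of $\Exp(\uph_{\zero})$ are \emph{literally identical} --- no asymptotics, no parameter matching. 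The nearest-neighbour distances, however, are $\sqrt{p_{1}^{2} + q_{1}^{2}}$ versus $\min\big\{\sqrt{p_{1}^{2} + q_{2}^{2}},\, \sqrt{p_{2}^{2} + q_{1}^{2}}\big\}$; sending $p_{1} = q_{1} \to 0$ drives the first to $0$ while the second stays near $p_{2}$, which is arbitrary. Equal expected heights, arbitrarily different metric status, and the inconclusiveness follows immediately.

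Your route could in principle be completed, but note a real tension in your $S^{(1)}$: you ask that $\zero$ sit ``far outside the bulk'' and simultaneously possess $\Theta(N)$ potential ancestors each surviving with probability bounded away from zero. The latter forces ratios like $(p_{i+1} - p_{i})/p_{i+1}$ to stay bounded below, so the points spread geometrically; but then $\zero$, at distance of order $p_{1}$ from its nearest neighbour, is \emph{close} relative to the diameter of order $p_{N}$ --- hardly an obvious outlier. Resolving this is exactly the delicate balancing act that the transposition trick makes unnecessary.
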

\begin{proof} Let $ p_{0} = 0 < p_{1} < \ldots, p_{N} $ and $ q_{0} = 0 < q_{1} < \ldots < q_{N} $ and consider the following two sets.
	\begin{enumerate}[{Set} 1.]
		\item\label{Exm Identity Permutation} (A monotone configuration.)
		Let $ S_{1} = \{ \x_{0}, \x_{1}, \ldots, \x_{N}\} \subseteq \R^{2} $ satisfy \textsc{Hypothesis} \ref{Hyp 2D Configuration}. Let 
		$ G_{S_{1}} = \big\{0, p_{1}, \ldots, p_{N} \big\} \times 
		\big\{0, q_{1}, \ldots, q_{N}\big\}  $ be its associated grid, understanding that $ p_{0} = q_{0} = 0 $ and suppose that $ \x_{i} = p_{i}\eversor_{1} + q_{i}\eversor_{2}  $ for $ i = 0, 1, \ldots, N $. In this particular case, the ancestors are identified with the points $ \x_{i} \in S $, moreover they are the upper right corners of the associated rectangles. Hence,
		\begin{equation*}
		\uph_{\zero} = \sum_{i \, \in \, [N]} 
		\X_{[p_{i}, q_{i}]} .
		\end{equation*}
		Next, observe that
		\begin{equation*}
		\Exp\big(\X_{[p_{i},q_{i}]}\big) = \begin{cases}
		\tfrac{1}{2}\tfrac{p_{i + 1} - p_{i}}{p_{i + 1}} + 
		\tfrac{1}{2}\tfrac{q_{i + 1} - q_{i}}{q_{i + 1}}, & 1 \leq i \leq N  - 1 ,\\
		1 , & i = N .
		\end{cases}
		\end{equation*}
		%
		Thus, the expectation is given by 
		\begin{subequations}\label{Eq Expected Height and Distance Exm 1}
			\begin{equation}\label{Eq Expected Height Exm 1}
			%
			\Exp\big(\uph_{\zero}\big)  
			= 1 + \sum_{i  \, = \, 1}^{N - 1} 
			\frac{1}{2}\frac{q_{i + 1} - q_{i}}{q_{i + 1}} + 
			\frac{1}{2}\frac{p_{i + 1} - p_{i}}{p_{i + 1}},
			%
			\end{equation}
			and the distance from the origin to the rest of the set is given by
			\begin{equation}\label{Eq Distance Exm 1}
			%
			\rho_{1} \defining \dist\big(\x_{0}, S_{1} - \{\x_{0}\} \big)= 
			\sqrt{p_{1}^{2} + q_{1}^{2}} .
			%
			\end{equation}
		\end{subequations}
		%
		%
		%
		\item\label{Exm Srategic Transposition} (A strategic transposition.)
		Let $ S_{2} = \{ \y_{0}, \y_{1}, \ldots, \y_{N}\} \subseteq \R^{2} $ satisfy \textsc{Hypothesis} \ref{Hyp 2D Configuration}. Let 
		$ G_{S_{2}} = \big\{0, p_{1}, \ldots, p_{N} \big\} \times 
		\big\{0, q_{1}, \ldots, q_{N}\big\}  $ be its associated grid, understanding that $ p_{0} = q_{0} = 0 $ and suppose that 
		\begin{align*}
		& \y_{0} = \begin{pmatrix}
		0 \\ 0
		\end{pmatrix}, &
		& \y_{1} = \begin{pmatrix}
		p_{1} \\ q_{2}
		\end{pmatrix}, &
		& \y_{2} = \begin{pmatrix}
		p_{2} \\ q_{1}
		\end{pmatrix}, &
		& \y_{i} = \begin{pmatrix}
		p_{i} \\ q_{i}
		\end{pmatrix}, \text{ for all } i = 3, \ldots,  N .
		\end{align*}
		In this particular case, all the points $ \y_{i} \in S $ define each one a potential ancestor, but there is an additional one, the potential ancestor $ [p_{2}, q_{2}] $. Then
		\begin{equation*}
		\begin{split}
		\uph_{\zero}  & = 
		\X_{[p_{2}, q_{2}]} +
		\X_{[p_{1}, q_{2}]} + 
		\X_{[p_{2}, q_{1}]} +
		\sum_{i  \, = \, 3}^{N}\X_{[p_{i}, q_{i}]} \\
		& = \X_{[p_{1}, q_{2}]} + 
		\X_{[p_{2}, q_{1}]} +
		\sum_{i  \, = \, 2}^{N}\X_{[p_{i}, q_{i}]} .
		\end{split}
		\end{equation*}
		Computing the expectations of each function we get
		\begin{equation*}
		\Exp\big(\X_{[p_{i},q_{j}]}\big) = \begin{cases}
		\tfrac{1}{2}\tfrac{p_{2} - p_{1}}{p_{2}}, & i = 2, j = 1 ,\\
		\tfrac{1}{2}\tfrac{q_{2} - q_{1}}{q_{2}}, & i = 1, j = 2 ,\\
		\tfrac{1}{2}\tfrac{p_{i} - p_{i - 1}}{p_{i}} + 
		\tfrac{1}{2}\tfrac{q_{i} - q_{i -1}}{q_{i}}, & 2 \leq i = j \leq N  - 1 ,\\
		1 , & i = j = N .
		\end{cases}
		\end{equation*}
		Therefore,
		\begin{equation*} 
		%
		\Exp\big(\uph_{\zero}\big)  
		= 
		\frac{1}{2}\frac{q_{2} - q_{1}}{q_{2}} + 
		\frac{1}{2}\frac{p_{2} - p_{1}}{p_{2}} +
		\sum_{i  \, = \, 2}^{N - 1} 
		\frac{1}{2}\frac{q_{i} - q_{i -1}}{q_{i}} + 
		1 +
		\frac{1}{2}\frac{p_{i} - p_{i -1}}{p_{i}} .
		%
		\end{equation*}
		Hence, the expected height is given by
		\begin{subequations}\label{Eq Expected Height and Distance Exm 2}
			\begin{equation}\label{Eq Expected Height Exm 2}
			\Exp\big(\uph_{\zero}\big) = 
			1 + \sum_{i  \, = \, 1}^{N - 1} 
			\frac{1}{2}\frac{q_{i} - q_{i -1}}{q_{i}} + 
			\frac{1}{2}\frac{p_{i} - p_{i -1}}{p_{i}}
			\end{equation}
			and the distance from the origin to the rest of the set is given by
			\begin{equation}\label{Eq Distance Exm 2}
			%
			\rho_{2}\defining
			\dist\big(\x_{0}, S_{2} - \{\x_{0}\} \big)= 
			\min\Big\{
			\sqrt{p_{1}^{2} + q_{2}^{2}},
			\sqrt{p_{2}^{2} + q_{1}^{2}} \Big\}.
			%
			\end{equation}
		\end{subequations}
		%
		%
	\end{enumerate}
	Notice that for both sets $ S_{1} $ and $ S_{2} $
	the expected height has identical value, as \textsc{Equations} \eqref{Eq Expected Height Exm 1} and \eqref{Eq Expected Height Exm 2} show. However, the topological distance from $ \mathbf{0} $ to the sets $ S_{1} - \{\zero \}, S_{2} - \{ \zero \} $ is different as \textsc{Equations} \eqref{Eq Distance Exm 1} and \eqref{Eq Distance Exm 2} show. Moreover, for simplicity assume that $ p_{1} = q_{1} $, $ p_{2} = q_{2} $ and let $ p_{1}  \rightarrow 0 $. Then, the distances behave as follows 
	\begin{align}\label{Stmt Qualitatively different limits}
	& \rho_{1} \xrightarrow[p_{1} \, \rightarrow \, 0]{} 0 , & 
	& \rho_{2} \xrightarrow[p_{1} \, \rightarrow \, 0]{} p_{2} .
	\end{align}
	Since $ p_{2} $ can take any value in $ \R $, the difference between distances can be arbitrarily large while their expected heights remain equal. In other words, in the first case the point is close to the set while in the second one $ p_{2} \in \R $ can be chosen so that $ \mathbf{0} $ becomes an anomaly.
	
	From the discussion above, it follows that although the IRF method is well-defined and it converges to $ \Exp(\uph_{\x}) $ for every $ \x \in S $ (see \textsc{Corollary} \ref{Thm the IRF Method Converges}), the topological-metric meaning of such expected value may change according to the configuration of the data. More specifically, the value $ \Exp(\uph_{\x}) $ is inconclusive from the topological-metric point of view and therefore, its reliability to asses whether or not a point is an anomaly, is uncertain. Moreover, the analysis of limits in the expression \eqref{Stmt Qualitatively different limits} discussed above, shows that no general quality certificate about the method can be given.
\end{proof}
\subsection{Variance and Confidence Intervals of the Monotone Random Forest $ \mrf $}\label{Sec Variance MRF}
%
%
%
%
In the present section we estimate the variance of the heights through the monotone random forest and use this information to give a number of Bernoulli trials (random sampling) in order to guarantee a confidence interval, endowed with a confidence level, for the computed value of the expected height.
%
%
\begin{theorem}\label{Thm Variance Heights 1D}
	Let $ \P = \{I_{n}: n\in [N]  \} $ 
	and $ \mrf $ be as in \textsc{Definition} \ref{Def Monotone Random Tree}. Then
	\begin{subequations}\label{Eq Variance Heights 1D Equality and Estimate}
		\begin{equation}\label{Eq Variance Heights 1D}
		\Var(\uph_{i}) = 
		\sum\limits_{k \neq i} \Exp(\X_{i, k}) +
		\sum\limits_{k \neq i}\Exp(\X_{i, k}) \sum\limits_{\substack{\ell \neq i\\\ell \neq k}} \Exp( \X_{i, \ell})
		- \Exp^{2}(\uph_{i}) ,
		\end{equation}
		\begin{align}\label{Ineq Variance Heights 1D}
		\Var(\uph_{i}) \leq \Exp(\uph_{i}), 
		\end{align}
		for all $ i = 0, \ldots, N $.
	\end{subequations}
\end{theorem}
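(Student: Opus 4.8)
The plan is to start from part (i) of \textsc{Lemma}~\ref{Thm Expectation of Random Variables}, namely $\uph_i=\sum_{k\neq i}\X_{i,k}$, square it, and take expectations. Since each $\X_{i,k}$ takes values in $\{0,1\}$ we have $\X_{i,k}^{2}=\X_{i,k}$, so
\[
\uph_i^{2}=\sum_{k\neq i}\X_{i,k}+\sum_{k\neq i}\;\sum_{\substack{\ell\neq i\\ \ell\neq k}}\X_{i,k}\X_{i,\ell},
\]
and hence $\Var(\uph_i)=\Exp(\uph_i^{2})-\Exp^{2}(\uph_i)$ has the stated form \eqref{Eq Variance Heights 1D} as soon as one knows that $\Exp(\X_{i,k}\X_{i,\ell})=\Exp(\X_{i,k})\Exp(\X_{i,\ell})$ for every pair $k\neq\ell$ with $k,\ell\neq i$; this pairwise identity is the real content of \eqref{Eq Variance Heights 1D}. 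To prove it I would use the combinatorial description underlying part (ii) of the same lemma: $I_{k}$ is an ancestor of $I_{i}$ in $T$ if and only if, among the block of intervals $\{I_{m}:\min(i,k)\leq m\leq\max(i,k)\}$, the interval $I_{k}$ is the one chosen first as a subtree root by the recursion of \textsc{Definition}~\ref{Def Monotone Random Tree}. Encoding the recursive choices by independent keys $K_{m}=V_{m}^{1/|I_{m}|}$ with $V_{m}$ i.i.d.\ uniform on $[0,1]$, so that the decreasing order of the $K_{m}$ reproduces exactly the order in which intervals are picked, one has $\X_{i,k}=\ind\{K_{k}=\max_{m}K_{m}\text{ over the block of }k\}$, and the joint probability $\Exp(\X_{i,k}\X_{i,\ell})$ becomes a concrete integral against the densities $w_{m}s^{w_{m}-1}$ of the $K_{m}$.

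With \eqref{Eq Variance Heights 1D} available, the estimate \eqref{Ineq Variance Heights 1D} is a short computation. Writing $p_{k}:=\Exp(\X_{i,k})=|I_{k}|/w_{i,k}$ and expanding $\Exp^{2}(\uph_i)=\bigl(\sum_{k\neq i}p_{k}\bigr)^{2}=\sum_{k\neq i}p_{k}^{2}+\sum_{k\neq i}p_{k}\sum_{\ell\neq i,\ell\neq k}p_{\ell}$, the two double sums in \eqref{Eq Variance Heights 1D} and in $\Exp^{2}(\uph_i)$ cancel, leaving
\[
\Var(\uph_i)=\sum_{k\neq i}p_{k}(1-p_{k})=\sum_{k\neq i}\Var(\X_{i,k})\;\leq\;\sum_{k\neq i}p_{k}=\Exp(\uph_i),
\]
the inequality being just $p_{k}(1-p_{k})\leq p_{k}$ for $p_{k}\in[0,1]$. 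So the entire difficulty is concentrated in the pairwise factorization claimed in the previous paragraph.

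The main obstacle is precisely that factorization. When $I_{k}$ and $I_{\ell}$ lie on the same side of $I_{i}$ (say $k<\ell\leq i$) the two blocks are nested, $\{K_{k}=\max\text{ over }[k,i]\}$ forces $K_{k}$ above everything in $[\ell,i]$, and conditioning on $K_{k}$ amounts to truncating the keys in $[\ell,i]$ by a common threshold, which is a pure rescaling and therefore leaves the relative order inside $[\ell,i]$ untouched; the integral then factors and one recovers $p_{k}p_{\ell}$ cleanly. The delicate case is when $I_{k}$ and $I_{\ell}$ lie on opposite sides of $I_{i}$, say $k<i<\ell$: then the blocks $[k,i]$ and $[i,\ell]$ overlap exactly at the index $i$, and both defining events constrain $K_{i}$ from above, so no naive symmetry applies. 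Here I would condition on $K_{i}=x$, reduce the left event to $\{K_{k}=\max_{k\leq m\leq i-1}K_{m}\}\cap\{K_{k}>x\}$ and the right event symmetrically, use that these are conditionally independent given $K_{i}=x$, compute each conditional probability as an explicit function of $x$, and integrate $x$ out against the density $|I_{i}|\,x^{|I_{i}|-1}$. Verifying that this overlap integral indeed collapses to $p_{k}p_{\ell}$ is the crux of the argument, and it is the step I expect to require the most care; any discrepancy would have to be absorbed into the covariance terms $\sum_{k<i<\ell}\Cov(\X_{i,k},\X_{i,\ell})$ and controlled there instead.
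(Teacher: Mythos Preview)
Your strategy coincides with the paper's: expand $\uph_i^{2}$, use $\X_{i,k}^{2}=\X_{i,k}$, and reduce everything to the pairwise factorization $\Exp(\X_{i,k}\X_{i,\ell})=\Exp(\X_{i,k})\Exp(\X_{i,\ell})$. Your same-side argument via the key representation is correct and matches the paper's conditional-probability treatment of those two cases. Your hesitation about the opposite-side case, however, is fully warranted: the factorization genuinely fails there. Take $N=3$ with $|I_1|=|I_2|=|I_3|$ and $i=2$; then $\Exp(\X_{2,1})=\Exp(\X_{2,3})=\tfrac12$, but $\Exp(\X_{2,1}\X_{2,3})=\tfrac13$ (both indicators equal $1$ precisely when $I_2$ is not the root and the other non-root interval sits above $I_2$, an event of probability $\tfrac23\cdot\tfrac12$). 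Hence $\Var(\uph_2)=\tfrac23$, whereas the right-hand side of \eqref{Eq Variance Heights 1D}, which as you correctly noted simplifies to $\sum_{k\neq i}p_k(1-p_k)$, gives $\tfrac12$; so the equality \eqref{Eq Variance Heights 1D} is false as stated. The paper's proof handles this case by asserting $\prob(\X_{i,k}\X_{i,\ell}=1)=\prob(\X_{\ell,k}=1)\prob(\X_{i,\ell}=1)$ and then using $\Exp(\X_{i,k})\Exp(\X_{i,\ell})$ in the next display; in the same example these evaluate to $\tfrac16$ and $\tfrac14$, neither equal to $\tfrac13$, so the paper carries exactly the gap you flagged rather than resolving it.

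This also blocks your route to \eqref{Ineq Variance Heights 1D}: the identity $\Var(\uph_i)=\sum_{k\neq i}p_k(1-p_k)$ is unavailable, and since the opposite-side covariances are strictly positive one in fact has $\Var(\uph_i)>\sum_{k\neq i}\Var(\X_{i,k})$. To salvage the inequality one must show directly that $\sum_{k\neq\ell}\Exp(\X_{i,k}\X_{i,\ell})\leq\Exp^{2}(\uph_i)$, equivalently that the total positive covariance contributed by pairs on opposite sides of $I_i$ is dominated by $\sum_{k\neq i}p_k^{2}$. Neither your proposal nor the paper's argument supplies such an estimate, so this step remains open.
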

\begin{proof}
	Recall that $ \Var(\uph_{i} ) =  \Exp(\uph_{i}^{2}) - \Exp^{2}(\uph_{i}) $ and that $ \uph_{i} = \sum\limits_{k \neq i} \X_{i, k} $, then 
	\begin{equation}\label{Eq Variance Heights 1D Basic Equality}
	\begin{split}
	\Var(\uph_{i}) & = \Exp(\uph_{i}^{2})  - \Exp^{2}(\uph_{i}) =
	\sum\limits_{k \neq i} \Exp(\X_{i, k}) +
	\sum\limits_{k \neq i} \sum\limits_{\substack{\ell \neq i\\\ell \neq k}} \Exp(\X_{i, k}  \X_{i, \ell})
	- \Exp^{2}(\uph_{i}) .
	\end{split}
	\end{equation}
	%
	In order to analyze the independence of the random variables involved in the expression above we proceed by cases
	\begin{subequations}
		\begin{multline*}
		k < i < \ell \text{ or } \ell < i < k: \\
		\prob(\X_{i, k} \X_{i, \ell} = 1) = \prob(\X_{\ell, k} = 1) \prob(\X_{i, \ell} = 1), 
		\end{multline*}
		\begin{multline*}
		k < \ell < i \text{ or } i < \ell < k: \\
		\Exp(\X_{i, k} \X_{i, \ell}) =
		\prob(\X_{i, k} \X_{i, \ell} = 1) = 
		\prob(\X_{i, k} = 1) \prob(\X_{i, \ell} = 1 \vert \X_{i, k} = 1) = \Exp(\X_{i,k}) \Exp(\X_{i, \ell}),
		\end{multline*}
		\begin{multline*}
		\ell < k < i \text{ or } i< k < \ell: \\
		\Exp(\X_{i, k} X_{i, \ell}) = \prob(\X_{i, k} \X_{i, \ell} = 1) = 
		\prob(\X_{i, \ell} = 1) \prob(\X_{i, k} = 1 \vert \X_{i, \ell} = 1) = \Exp(\X_{i,\ell}) \Exp(\X_{i, k}) .
		\end{multline*}
	\end{subequations}
	Using the latter to bound the second summand of the right hand side in the expression \eqref{Eq Variance Heights 1D Basic Equality}, we get
	\begin{equation*}
	\begin{split}
	\sum\limits_{k \neq i} \sum\limits_{\substack{\ell \neq i\\\ell \neq k}} \Exp(\X_{i, k}  \X_{i, \ell}) & =
	\sum\limits_{k \neq i} \sum\limits_{\substack{\ell \neq i\\\ell \neq k}} \Exp(\X_{i, k}) \Exp(\X_{i, \ell}) \\
	& =
	\sum\limits_{k \neq i} \Exp(\X_{i, k}) \sum\limits_{\substack{\ell \neq i\\\ell \neq k}}  \Exp(\X_{i, \ell}) \\
	& \leq
	\sum\limits_{k \neq i} \Exp(\X_{i, k}) \sum\limits_{\ell \neq i }  \Exp(\X_{i, \ell}) =
	\Exp^{2}(\uph_{i})  .
	\end{split}
	\end{equation*}
	Combining the equality of the second line above with \eqref{Eq Variance Heights 1D Basic Equality}, \textsc{Equation} \eqref{Eq Variance Heights 1D} follows. Finally, combining the inequality of the third line in the expression above with \eqref{Eq Variance Heights 1D Basic Equality}, the estimate \eqref{Ineq Variance Heights 1D} follows.
\end{proof}
Getting an estimate of the variance is useful to establish the number of Bernoulli trials (sampling) that have to be done in order to assure a confidence level for the numerical results. For instance, if the confidence interval is to furnish, respectively a 90\% and 95\% confidence, the number of trials is given by (see \cite{Thompson})
\begin{align*}
& \widetilde{K}_{90\%} \defining \big(\frac{1.645}{0.1}\big)^{2} \Var(\uph_{i}), &
& \widetilde{K}_{95\%} \defining \big(\frac{1.96}{0.05}\big)^{2} \Var(\uph_{i}) .
\end{align*}
Therefore, we would like the value of $ \max\limits_{j \, = \, 1}^{N}\Var(\uph_{j}) $. However, it is not possible to give a closed formula, hence we aim for an estimate. For a fixed number of $ N $ points distributed inside a fixed interval, namely $ (0,1) $, it is well-known that the variance of the heights will be maximum when the points are equidistant i.e., the chances for an interval to be chosen attain its maximum level of uncertainty. Consequently, we adopt the maximum possible variance of a monotone partition $ \P $ whose endpoints are $ x_{i} = \dfrac{i}{N} $, $ i = 0, 1, \ldots, N $. We use the equality \eqref{Eq Variance Heights 1D Basic Equality} to compute numerically such maxima, the table \ref{Tb Maximum Variance Table} displays certain important values. 
\begin{table}[h!]
	\begin{center}
		\begin{tabular}{ c c c }
			\hline
			Exponent & Intervals & Maximum \\
			$ j $ & $ 3^{j} $ & Variance \\
			\hline
			1 & 3 &	0.25 \\
			2 & 9 &	1.32 \\
			3 & 27 & 3.22 \\
			4 & 81 & 5.32 \\
			5 & 243	& 7.48 \\
			6 & 729	& 9.67 \\
			7 & 2187 & 11.86 \\
			\hline
		\end{tabular}
	\end{center}
	\caption{ Maximum Variance Table }\label{Tb Maximum Variance Table}
\end{table}

An elementary linear regression adjustment gives
\begin{align*}
& \Var\big(3^{j} \text{ congruent intervals}\big) = 1.99 j - 2.38 , &
& \kappa = 0.9967, &
& \sigma = 0.076 .
\end{align*}
Here $ \kappa $ is the correlation coefficient  and $ \sigma $ is the standard error. A quick change of variable gives
\begin{align}\label{Eq Variance Logarithmic Model}
& \Var\big(n \text{ congruent intervals} \big) = \frac{1.99}{\log 3} \log n - 2.38 , &
& \kappa = 0.9967, &
& \sigma = 0.076 ,
\end{align}
where $ n $ is the number of congruent intervals in the monotone partition $ \P $.

For a general problem, we compute the corresponding value $ \tilde{ \sigma}^{2}_{N} $ from the expression \eqref{Eq Variance Heights 1D Basic Equality}, adopt it from a table such as \ref{Tb Maximum Variance Table}, or use a regression model such as \textsc{Equation} \eqref{Eq Variance Logarithmic Model}. In the following, we use the adopted value of $ \tilde{ \sigma}^{2}_{N} $ in \eqref{Eq Estimate Number of Bernoullin Trials}, to compute the number of necessary Bernoulli trials $ K $ according to the desired confidence level
\begin{align}\label{Eq Estimate Number of Bernoullin Trials}
& K_{90\%} \defining \big(\frac{1.645}{0.1}\big)^{2} \tilde{ \sigma}^{2}_{N}, &
& K_{95\%} \defining \big(\frac{1.96}{0.05}\big)^{2} \tilde{ \sigma}^{2}_{N}.
\end{align}
%
%
%
\subsection{The Relationship Between Monotone Random Trees and iTrees}\label{Sec MRT and iTree}
%
%
In the present section, we illustrate the link between the monotone random tree algorithm introduced in \textsc{Definition} \ref{Def Monotone Random Tree} and the iTree introduced in \textsc{Definition} \ref{Def Isolation Random Tree} for the 1D setting. To that end we first recall a definition and a proposition from basic graph theory (see \cite{GrossYellen})
\begin{definition}\label{Def Line Graph}
	The\textbf{ line graph} $ L(G) $ of a graph $ G $ has a vertex for each edge of $ G $, and two vertices in $ L (G) $ are adjacent if and only if the corresponding edges in $ G $ have a vertex in common.
\end{definition}
\begin{proposition}\label{Thm Line Graph}
	The line graph of a tree is also a tree. Moreover, $ h\big(L(T)\big) = h(T) - 1 $, where $ h(\cdot ) $ denotes the height of the graph.
\end{proposition}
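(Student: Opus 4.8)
The plan is to prove the two assertions in sequence, working directly from the definitions of line graph (Definition \ref{Def Line Graph}) and of the height of a rooted tree. First I would establish that $L(T)$ is a tree, i.e.\ that it is connected and acyclic. Connectivity of $L(T)$ is immediate: given two edges $e, e'$ of $T$, a path in $T$ joining an endpoint of $e$ to an endpoint of $e'$ visits a sequence of consecutive edges $e = e_{0}, e_{1}, \ldots, e_{m} = e'$ in which each consecutive pair shares a vertex, hence $e_{0}, \ldots, e_{m}$ is a walk in $L(T)$. For acyclicity, I would argue by contradiction: a cycle $e_{0}, e_{1}, \ldots, e_{r-1}, e_{0}$ in $L(T)$ (with $r \geq 3$) yields a closed walk in $T$ formed by concatenating the shared-vertex incidences; tracking the shared vertices $v_{i} \in e_{i-1} \cap e_{i}$ one checks that the $v_{i}$ cannot all coincide (the $e_{i}$ are distinct edges of a tree, so no two share both endpoints), which forces a cycle in $T$, contradicting that $T$ is a tree. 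Since a connected acyclic graph is a tree, $L(T)$ is a tree.

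For the height statement, I would first fix the rooting convention: the root of $L(T)$ is taken to be the (unique) edge of $T$ incident to the root $\troot(T)$ in the direction of $T$'s structure — in the binary-tree setting of this paper, $h(\cdot)$ counts edges on the longest root-to-leaf path, so I should be careful to state which vertex of $L(T)$ plays the role of root. The natural bijection is: to each non-root vertex $v$ of $T$ associate the edge $e_{v} = (\mathrm{parent}(v), v)$; this is a bijection between $V(T) \setminus \{\troot(T)\}$ and $E(T) = V(L(T))$. Under this correspondence, if $v_{0} = \troot(T), v_{1}, \ldots, v_{k}$ is a root-to-leaf path in $T$ realizing the height $h(T) = k$, then $e_{v_{1}}, e_{v_{2}}, \ldots, e_{v_{k}}$ is a path in $L(T)$ of length $k-1$ (consecutive edges $e_{v_{i}}, e_{v_{i+1}}$ share the vertex $v_{i}$), starting at the root edge $e_{v_{1}}$; conversely any root-to-leaf path in $L(T)$ pulls back to a root-to-leaf path in $T$ with one more vertex. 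Hence the longest such path in $L(T)$ has length exactly $h(T) - 1$, giving $h\big(L(T)\big) = h(T) - 1$.

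The main obstacle — really the only delicate point — is the bookkeeping in the acyclicity argument and, more importantly, pinning down precisely what ``root of $L(T)$'' and ``height'' mean so that the $-1$ comes out correctly rather than, say, an off-by-one in the other direction. In particular one must handle the degenerate case $h(T) = 0$ (a single vertex, empty $L(T)$) or $h(T)=1$ separately, since the formula should be read with the convention that the height of the empty graph, or the appropriate base case, is handled consistently; I would note this explicitly. Everything else is routine verification of incidences, so I would keep that part brief. In the present paper this proposition is quoted from \cite{GrossYellen}, so in practice one may simply cite it; the sketch above is what a self-contained argument would look like.
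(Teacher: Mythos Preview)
Your acyclicity argument contains a genuine gap, and in fact the step where it fails reveals that the proposition, read with the standard line-graph definition (\textsc{Definition}~\ref{Def Line Graph}), is not true for arbitrary trees. You write that the shared vertices $v_{i}\in e_{i-1}\cap e_{i}$ ``cannot all coincide''; but they can. Take the star $K_{1,3}$: its three edges pairwise share the central vertex, so $L(K_{1,3})=K_{3}$, a triangle. More generally, any vertex of degree $\geq 3$ in $T$ (and the binary trees in this paper have internal vertices of degree~$3$) produces a clique in $L(T)$, hence a cycle. So the contradiction you aim for --- a cycle in $T$ --- simply does not materialise, and no amount of bookkeeping will rescue that route.

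What the paper actually does is defer entirely to \cite{GrossYellen}; there is no in-text argument to compare against. More to the point, if you inspect \textsc{Figure}~\ref{Fig Example MRF to iTree}, the object called $L\big(\G(T)\big)$ is \emph{not} the line graph in the sense of \textsc{Definition}~\ref{Def Line Graph}: sibling edges of $\G(T)$ (e.g.\ the edges labelled $\{x_{0},x_{1},x_{2},x_{3}\}$ and $\{x_{4},x_{5}\}$, both incident to $I_{4}$) are \emph{not} made adjacent. What is really being used is the rooted ``edge tree'' in which two edges are joined only when one is the parent edge of the other. For \emph{that} construction your bijection $v\mapsto e_{v}=(\mathrm{parent}(v),v)$ is exactly the right tool, and your height computation $h(L(T))=h(T)-1$ goes through cleanly. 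If you want a self-contained proof, I would recommend stating and proving the result for this rooted variant rather than for the standard line graph.
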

\begin{proof}
	See \cite{GrossYellen}.
\end{proof}
In order to illustrate the relationship between monotone random trees and iTrees consider the tree $ T $  of \textsc{Figure} \ref{Fig Example MRF Tree} and transform it into the one displayed in \textsc{Figure} \ref{Fig Example MRF to iTree} (a), denoted by $ \G(T) $. The set of data $ S $, is given by the extremes of the intervals in $ \P $, each node hosting an interval has two children and the edges were labeled, using the corresponding left and right subsets generated when the interval is chosen. Abstract vertices were added whenever the label had a singleton. The root of $ \G(T) $ is also an abstract vertex; additionally, we introduce an edge connecting $ \troot (\G(T)) $ with $ \troot(T) $, labeled by the full set $ S $. Once the $ \G(T) $ tree is constructed, it is direct to see that its line graph $ L\big(\G(T)\big) $ is an isolation tree (iTree) of the data $ S $. 
\begin{figure}[h]
	\centering
	\begin{subfigure}[Structure $ \G(T)$, extension and relabeling of the monotone random tree $ T $ of \textsc{Figure} \ref{Fig Example MRF Tree}.]
		{
			\begin{tikzpicture}
			[scale=.670,auto=left,every node/.style={}]
			\node (n4) at (5,8)  {$ I_{4} $};
			\node (n1) at (2,6)  {$ I_{1} $};
			\node (n3) at (4,4)  {$ I_{3} $};
			\node (n2) at (6,2)  {$ I_{2} $};
			\node (n5) at (8,6)  {$ I_{5} $};
			\node (n6) at (0,4)  {$ \bullet $};
			\node (n7) at (2,2)  {$  \bullet $};
			\node (n8) at (8,0)  {$ \bullet $};
			\node (n9) at (4,0)  {$ \bullet $};
			\node (n10) at (5,10)  {$ \bullet $};
			\node (n11) at (10,4)  {$ \bullet $};
			\node (n12) at (6,4)  {$ \bullet $};
			
			\node (n13) at (6.8,9)  {\footnotesize $ \{x_{0}, x_{1}, x_{2}, x_{3}, x_{4}, x_{5} \} $};
			\node (n14) at (2,7.2)  {\footnotesize $ \{x_{0}, x_{1}, x_{2}, x_{3} \} $};
			\node (n15) at (7.4,7.2)  {\footnotesize $ \{ x_{4}, x_{5} \} $};
			\node (n16) at (0.3,5)  {\footnotesize $ \{x_{0}\} $};
			\node (n17) at (4.2,5)  {\footnotesize $ \{x_{1}, x_{2}, x_{3}\} $};
			\node (n18) at (6.4,5)  {\footnotesize $ \{x_{4} \} $};
			\node (n19) at (9.7,5)  {\footnotesize $ \{x_{5} \} $};
			\node (n20) at (2.3,3)  {\footnotesize $ \{x_{1}\} $};
			\node (n21) at (6,3)  {\footnotesize $ \{x_{2}, x_{3}\} $};	
			\node (n22) at (4.3,1)  {\footnotesize $ \{x_{2}\} $};
			\node (n23) at (7.7,1)  {\footnotesize $ \{ x_{3}\} $};		
			
			
			\foreach \from/\to in {n4/n1, n1/n3, n3/n2, n4/n5,
				n1/n6, n3/n7, n2/n8, n2/n9, n10/n4, n5/n11, n5/n12}
			\draw[line width = 1.0] (\from) -- (\to);  
			%

			\end{tikzpicture}
		}
	\end{subfigure}
	\hspace{1cm}
	\begin{subfigure}[Structure $ L\big(\G(T)\big) $, line graph of $ \G(T) $.]
		{
			\begin{tikzpicture}
			[scale=.670,auto=left,every node/.style={}]
			
			\node (n13) at (5,9)  {\footnotesize $ \{x_{0}, x_{1}, x_{2}, x_{3}, x_{4}, x_{5} \} $};
			\node (n14) at (2,7)  {\footnotesize $ \{x_{0}, x_{1}, x_{2}, x_{3} \} $};
			\node (n15) at (8,7)  {\footnotesize $ \{ x_{4}, x_{5} \} $};
			\node (n16) at (0, 5)  {\footnotesize $ \{x_{0}\} $};
			\node (n17) at (4, 5)  {\footnotesize $ \{x_{1}, x_{2}, x_{3}\} $};
			\node (n18) at (6,5)  {\footnotesize $ \{x_{4} \} $};
			\node (n19) at (10,5)  {\footnotesize $ \{x_{5} \} $};
			\node (n20) at (2,3)  {\footnotesize $ \{x_{1}\} $};
			\node (n21) at (6,3)  {\footnotesize $ \{x_{2}, x_{3}\} $};	
			\node (n22) at (4,1)  {\footnotesize $ \{x_{2}\} $};
			\node (n23) at (8,1)  {\footnotesize $ \{ x_{3}\} $};		
			
			
			\foreach \from/\to in {n13/n14, n13/n15, n14/n16, n14/n17,
				n15/n18, n15/n19, n17/n20, n17/n21, n21/n22, n21/n23}
			\draw[line width = 1.0] (\from) -- (\to);  
			%

			\end{tikzpicture}
		}
	\end{subfigure}
	\caption{Schematics of the bijection between $ \mrf $ and $ \irf $. Starting with the monotone random tree $ T $ of \textsc{Figure} \ref{Fig Example MRF Tree}, Figure (a) shows the first part of the transformation, while Figure (b) depicts the graph mapped in $ \irf $. 
	}
	\label{Fig Example MRF to iTree}
\end{figure}
\begin{remark}\label{Rem Monotone to iTree}
	It is possible to furnish a mathematically rigorous algorithm that would give a probability-preserving bijection between the spaces $ \mrf $ and $ \irf $ in 1D. This would deliver relationships between expected heights and topological properties, as well as properties of the variance for the  $ \irf $ in 1D setting. However, such construction is highly technical and contributes little to our topic of interest, therefore we omit it here. In contrast, we present  \textsc{Theorem} \ref{Thm Expectated Heights 1D Data} as a simple theoretical tool relating expected heights and topological properties.
\end{remark}
\begin{theorem}\label{Thm Expectated Heights 1D Data}
	Let $ S \defining \{x_{0}, x_{1}, \ldots, x_{N} \} $ be an arbitrary set of points on the line such that $ x_{0} < x_{1} < \ldots < x_{N} $. Let $ \uph_{i}: \irf \rightarrow \N $ be the random variable with $ \uph_{i}(T) $ defined as the depth of the data $ x_{i} $ in the isolation random tree $ T $. Then
	\begin{equation}
	\Exp(\uph_{i}) = \O\Big( \log \Big(\frac{x_{N} - x_{0}}{\dist(x_{i}, S - \{x_{i}\} )} \Big)\Big) .
	\end{equation}
\end{theorem}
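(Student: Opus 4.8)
The plan is to reduce the statement about the 1D iTree depth to the corresponding statement about the monotone random forest, which was already proved in \textsc{Lemma} \ref{Thm Expectation of Random Variables}(iii). The key observation is that for a set $ S = \{x_0 < x_1 < \ldots < x_N\} $ on the line, the split values chosen by the iTree are points of the interval $ [x_0, x_N] $, and the points of $ S $ that get isolated correspond exactly to the intervals of the monotone partition $ \P $ induced by $ S $. More precisely, when the iTree is run on $ S $, the successive splits carve $ [x_0, x_N] $ into subintervals whose endpoints are among the $ x_j $; isolating the data point $ x_i $ is governed by the same branching process that isolates one of the two intervals $ I_i = [x_{i-1}, x_i] $ or $ I_{i+1} = [x_i, x_{i+1}] $ adjacent to $ x_i $ (the endpoints $ x_0 $ and $ x_N $ being special, adjacent to only one interval). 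This is precisely the content of the informal bijection sketched around \textsc{Figure} \ref{Fig Example MRF to iTree} and \textsc{Remark} \ref{Rem Monotone to iTree}, together with \textsc{Proposition} \ref{Thm Line Graph}, which tells us that passing to the line graph changes the height by exactly $ 1 $, an additive constant absorbed by the $ \O(\cdot) $ notation.

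First I would make precise the correspondence at the level of random variables: show that the depth $ \uph_i^{\irf} $ of $ x_i $ in the iTree is, up to an additive constant, dominated by $ \max\{\uph_{i}^{\mrf}, \uph_{i+1}^{\mrf}\} $ (with the obvious convention at the endpoints), where $ \uph_k^{\mrf} $ is the depth of interval $ I_k $ in the monotone random forest on $ \P $. Equivalently, since $ \max\{a,b\} \le a + b $ and both are nonnegative, it suffices to bound $ \Exp(\uph_i^{\irf}) = \O\big(\Exp(\uph_i^{\mrf}) + \Exp(\uph_{i+1}^{\mrf}) + 1\big) $. Then I would invoke \textsc{Lemma} \ref{Thm Expectation of Random Variables}(iii) to get, for each adjacent interval $ I_k $,
\begin{equation*}
\Exp(\uph_k^{\mrf}) = \O\Big(\log\frac{\sum\{|I| : I \in \P\}}{|I_k|}\Big) = \O\Big(\log\frac{x_N - x_0}{|I_k|}\Big).
\end{equation*}
Finally, observing that $ \dist(x_i, S - \{x_i\}) = \min\{|I_i|, |I_{i+1}|\} $ for interior points (and equals $ |I_1| $ or $ |I_N| $ at the endpoints), the two logarithmic terms are each $ \O\big(\log\frac{x_N - x_0}{\dist(x_i, S-\{x_i\})}\big) $, and their sum (plus the constant) is of the same order, which is the claim.

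The main obstacle is making the reduction to $ \mrf $ rigorous without actually writing out the probability-preserving bijection that \textsc{Remark} \ref{Rem Monotone to iTree} deliberately suppresses. The cleanest route, which I would take, is to avoid the full bijection and instead argue directly: run the iTree on $ S $ and track only the sequence of subintervals of $ [x_0,x_N] $ containing $ x_i $. At each split inside the current subinterval $ [x_{a}, x_{b}] $ (with $ a \le i \le b $), a uniformly random split value $ p \in (x_a, x_b) $ falls to the left of $ x_i $ with probability $ (x_i - x_a)/(x_b - x_a) $ and to the right with the complementary probability — exactly the transition law of the monotone random tree restricted to the sub-partition from $ x_a $ to $ x_b $, read off interval $ I_i $ or $ I_{i+1} $. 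This coupling shows the number of splits needed to isolate $ x_i $ is stochastically comparable to the monotone-forest depth of an adjacent interval, and the additive $ +1 $ from \textsc{Proposition} \ref{Thm Line Graph} accounts for the edge introduced between the abstract root and $ \troot(T) $. Once this coupling is stated, the expectation bound is immediate from the already-established 1D estimate, and care is only needed with the boundary cases $ i = 0 $ and $ i = N $, where a single adjacent interval governs the depth.
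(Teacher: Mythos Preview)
Your proposal is correct and follows essentially the same route as the paper: relate the iTree depth of $x_i$ to $1+\max$ of the monotone-forest depths of the two adjacent intervals (with the obvious one-sided version at the endpoints), bound the $\max$ by the sum, apply \textsc{Lemma} \ref{Thm Expectation of Random Variables}(iii) to each term, and identify $\dist(x_i,S-\{x_i\})$ with the minimum of the two adjacent interval lengths. The paper states the relation $\uph_i = 1+\max\{\widetilde\uph_i,\widetilde\uph_{i-1}\}$ without deriving it (deferring to \textsc{Remark} \ref{Rem Monotone to iTree}), whereas you sketch the coupling argument explicitly; otherwise the arguments coincide.
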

\begin{proof}
	Define the intervals $ I_{n} \defining x_{n} - x_{n - 1}$ for every $ n \in [N] $ and the partition $ \P \defining \{ I_{n}: n\in \N \} $. Denote by $ \widetilde{\uph}_{i} $ the random variable indicating the depth in $ \mrf $ of the interval $ I_{i} $, then the following relations are direct
	\begin{align*}
	& \uph_{i} = 1 + \max\{ \widetilde{\uph}_{i}, \widetilde{\uph}_{i - 1} \} , & 
	& i \in \{ 1, \ldots, N\} ,\\
	& \uph_{0} = 1+ \widetilde{\uph}_{0}, &
	&  \uph_{N} = 1 + \widetilde{\uph}_{N} .
	\end{align*}
	%
	Since $ \widetilde{\uph}_{j} \geq 0$ for all $ j\in [N] $, then $ \uph_{i} \leq  1 + \widetilde{\uph}_{i} + \widetilde{\uph}_{i - 1} $, for all $ i \in [N] $; hence 
	\begin{equation*}
	\begin{split}
	\Exp(\uph_{i}) 
	& \leq 1 + \Exp(\widetilde{\uph}_{i}) +
	\Exp(\widetilde{\uph}_{i - 1}) \\
	&\leq 
	1 +
	\O\Big(\log \Big( \dfrac{\sum\{\vert I_{i}\vert: n \in [N]\}}{\vert I_{i}\vert} \Big) \Big) 
	+ 
	\O\Big(\log \Big( \dfrac{\sum\{\vert I_{i}\vert: n \in [N]\}}{\vert I_{i-1}\vert} \Big) \Big) \\
	& \leq 
	\O\Big(\log \Big( \dfrac{\sum\{\vert I_{i}\vert: n \in [N]\}}{\min\{\vert I_{i-1}\vert, \vert I_{i}\vert \}} \Big) \Big).
	\end{split}
	\end{equation*}
	Since $ x_{N} - x_{0} = \sum\{\vert I_{n}\vert: n \in [N]\} $ and $ \dist(x_{i}, S-\{x_{i}\}) = \min\{\vert I_{i-1}\vert, \vert I_{i}\vert \} $, the proof is complete.
\end{proof}
%
%
\subsection{A Modification of the IRF Method}\label{Sec Correction of the IRF Method}
%
%
%
In the current section we present an alternative version of the IRF Method motivated by the mathematical understanding we have on the 1D setting. The Directional Isolation Random Forest Method (DIRF Method) works as follows
%
%
%
%
%
%
\begin{definition}[The DIRF Method]\label{Def Directional Isolation Random Tree}
	Given an input data set $ S \defining \{\x_{0}, \x_{1}, \ldots, \x_{N} \} \subseteq  \R^{d} $, a number of Bernoulli trials $ K $ and an anomaly threshold criterion.  
	\begin{enumerate}[(i)]
		\item Find the principal directions of the set $ S $.
		
		\item Project the data on each direction, i.e., generate  $ \pi_{j}(S) \defining \{ \x \cdot \eversor_{j}: \x \in S\} $, for $ j = 1, \ldots, d $.
		
		\item For each Bernoulli trial, select at random one direction, namely $ j \in [d] $. Perform iTree (see \textsc{Defintion} \ref{Def Isolation Random Tree} and/or \textsc{Algorithm} \ref{Alg iTree Algorithm}) on $ S_{j} = \pi_{j}(S) $ and store the heights $ \{h(\x): \x\in S\} $ in a Log.
		
		
		
		\item For each $ \x \in S $, define $ \uph_{\text{DIRF}}(\x) $ as the average height of the collection of heights $ \{h_{k}(\x): k = 1, \ldots, K\} $.
		
		\item Declare as anomalies 
		$ A \defining \{\x \in S: \uph_{\text{DIRF}}(\x) \text{ satisfies the anomaly threshold criterion}  \} $.
	\end{enumerate}
\end{definition}
It is understood that the number of Bernoulli trials $ K $ (see \textsc{Section} \ref{Sec Variance MRF}), is chosen to assure a confidence level for the computed value of the expected heights. Notice that
\begin{align}\label{Stmt convergence of the DIRF method}
& \uph_{\text{DIRF}}(\x) \xrightarrow[K \, \rightarrow \,\infty]{}
\frac{1}{d} \sum\limits_{i \, =\, 1}^{d} \Exp\big(\uph^{(i)}(\x)\big), &
& \text{for all } \x \in S .
\end{align}
%
Here, $ \Exp\big(\uph^{(i)}(\x)\big) $ indicates the expected height of the data $ \x\cdot \eversor_{i} $ within the IRF of the set $ S_{i} = \pi_{i}(S) $, for all $ i \in [d] $.  The statement \eqref{Stmt convergence of the DIRF method} can be easily seen as follows: define $ A_{i}\defining \{k \in [K]: \text{ trial } k \text{ choses directon } i \} $, then
\begin{equation}
\uph_{\text{DIRF}}(\x) = \frac{1}{K}\sum\limits_{k \, = \, 1}^{K} h_{k}(\x) =
\sum\limits_{i \, = \, 1}^{d} \frac{ \# A_{i} }{K}\, \frac{1}{\# A_{i} }
\sum\limits_{k\, \in \,A_{i}} h_{k}(\x) .
\end{equation}
Due to the Law of Large Numbers (see \cite{BillingsleyProb}) it is clear that for all $ i \in [d]$, it holds that $ \frac{\# A_{i}}{K} \xrightarrow[K \, \rightarrow \,\infty]{} \frac{1}{d} $ and due to  \textsc{Corollary}  \ref{Thm the IRF Method Converges} $  \frac{1}{\# A_{i} }
\sum\limits_{k\, \in \,A_{i}} h_{k}(\x)  \xrightarrow[K \, \rightarrow \,\infty]{} \Exp \big(\uph^{(i)} \big( \x \big) \big) $.
\begin{theorem}[Computational Complexity of the Methods]\label{Thm Computational Costs}
	\begin{enumerate}[(i)]
		\item 	The computational cost of the DIRF method  is $ \mathcal{O}(N \log N) $.
		
		\item Under the hypothesis that DIRF and IRF have variance of the same order, the IRF method's computational cost is $ \mathcal{O}(N \log N) $. (As suggested in \cite{LiuIRF, Liu2012IsolationBasedAD}.)
	\end{enumerate}
\end{theorem}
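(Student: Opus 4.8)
The plan is to bound the cost of each stage of the two methods separately, treating the ambient dimension $d$ and the prescribed number of Bernoulli trials $K$ as fixed parameters and writing $n\defining N+1$ for the number of data points, so that the asymptotics are in $n$ alone. For part (i), the preprocessing in steps (i)--(ii) of \textsc{Definition} \ref{Def Directional Isolation Random Tree} costs $\mathcal{O}(n)$: forming the $d\times d$ covariance matrix is one pass over the data, its eigendecomposition is $\mathcal{O}(d^{3})$, and projecting onto the $d$ principal directions is $\mathcal{O}(dn)$; sorting each of the $d$ projected clouds $\pi_{j}(S)$ once and for all adds $\mathcal{O}(dn\log n)=\mathcal{O}(n\log n)$. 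The key observation is then that a single $1$D iTree built on a \emph{pre-sorted} array can be produced in $\mathcal{O}(n\log n)$ time: by \textsc{Proposition} \ref{Thm Well posedness of iTree} the tree has exactly $n-1$ internal nodes, and at each internal node the only non-constant work is to translate the randomly drawn split value $p$ into the index separating left from right, which, the array already being sorted, is a binary search of cost $\mathcal{O}(\log n)$, while the recursion itself merely carries a pair of array indices and a depth counter and so is $\mathcal{O}(1)$ per node. Summing over the $n-1$ nodes gives $\mathcal{O}(n\log n)$ per tree, over the $K$ trials $\mathcal{O}(Kn\log n)=\mathcal{O}(n\log n)$, and adding the preprocessing leaves the bound unchanged. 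I would stress that this argument deliberately avoids ever iterating over a subset $S_{v}$: a naive implementation would cost $\Theta\big(\sum_{v}|S_{v}|\big)=\Theta\big(n+\sum_{\x\in S}\uph_{\x}(T)\big)$, i.e.\ proportional to the total path length of $T$, which for data whose consecutive gaps decay geometrically is already $\Theta(n^{2})$ in expectation; it is precisely the pre-sorting that makes part (i) unconditional.

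For part (ii) the iTree is grown directly in $\R^{d}$, where the pre-sorting trick is unavailable: after a split along a coordinate $j$ the two subsets are not contiguous blocks of any single global order, so performing the next split --- locating the extremes of the chosen coordinate and partitioning the subset --- costs $\mathcal{O}(|S_{v}|)$ up to the factor $d$. Hence one iTree costs $\mathcal{O}\big(\sum_{v}|S_{v}|\big)=\mathcal{O}\big(n+\sum_{\x\in S}\uph_{\x}(T)\big)$, and in expectation $\mathcal{O}\big(n+\sum_{\x\in S}\Exp(\uph_{\x})\big)$, so the claim reduces to $\sum_{\x\in S}\Exp(\uph_{\x})=\mathcal{O}(n\log n)$, i.e.\ to an $\mathcal{O}(\log n)$ bound on the \emph{average} expected depth. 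For the one-dimensional forests DIRF uses, \textsc{Theorem} \ref{Thm Expectated Heights 1D Data} controls each $\Exp(\uph_{\x})$ by a logarithm and \textsc{Theorem} \ref{Thm Variance Heights 1D} shows that in the maximally balanced case the variances are of the same logarithmic order; the hypothesis in part (ii) is precisely that the spread of the depths for the $d$-dimensional forest is of this same order, and under it $\sum_{\x\in S}\Exp(\uph_{\x})=\mathcal{O}(n\log n)$, whence one iTree costs $\mathcal{O}(n\log n)$ and the whole IRF method $\mathcal{O}(Kn\log n)=\mathcal{O}(n\log n)$.

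The main obstacle is that part (ii) cannot be made unconditional: an iTree on badly clustered data genuinely has expected total path length $\Theta(n^{2})$, so without a balance assumption the per-node cost $\mathcal{O}(|S_{v}|)$ accumulates to $\Theta(n^{2})$; this is why the references only \emph{suggest} the $\mathcal{O}(n\log n)$ figure, and why DIRF, whose design forces every tree to be one-dimensional, is the one that attains it provably. The remaining points are routine bookkeeping: verifying that the $1$D per-node work is truly $\mathcal{O}(\log n)$ and hides no extra $|S_{v}|$ factor --- which is exactly where keeping all points in one shared sorted array and recursing on index pairs, rather than physically splitting arrays, is essential --- and the identity $\sum_{v}|S_{v}|=\sum_{\x\in S}\big(\uph_{\x}(T)+1\big)$ that converts node-work into path length.
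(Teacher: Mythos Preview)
Your argument and the paper's locate the $\log n$ factor in entirely different places, and the difference is not cosmetic. The paper's proof is essentially two lines: by the variance model \eqref{Eq Variance Logarithmic Model} one has $\tilde{\sigma}_{N}^{2}=\mathcal{O}(\log n)$, so the confidence--interval formula \eqref{Eq Estimate Number of Bernoullin Trials} forces the number of Bernoulli trials to satisfy $K=\mathcal{O}(\log n)$; each tree requires $N$ splitting instances by \textsc{Proposition}~\ref{Thm Well posedness of iTree}, and the product $K\cdot N$ is $\mathcal{O}(n\log n)$. The hypothesis in part (ii) is invoked in precisely this way: ``variance of the same order'' means that \eqref{Eq Estimate Number of Bernoullin Trials} again yields $K=\mathcal{O}(\log n)$ trees for IRF, and the same count applies verbatim.

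You instead declare $K$ a fixed constant and extract the $\log n$ from the \emph{per-tree} work --- binary search on a pre-sorted array for DIRF, total path length for IRF. This misreads the setting: in the DIRF method $K$ is not a free parameter but is prescribed by \eqref{Eq Estimate Number of Bernoullin Trials}, and \textsc{Section}~\ref{Sec Variance MRF} shows that prescription grows like $\log n$. Combined with your per-tree bound of $\mathcal{O}(n\log n)$, the honest total is then $\mathcal{O}(n\log^{2}n)$, not the claimed $\mathcal{O}(n\log n)$. Your interpretation of the hypothesis in (ii) --- as a control on the expected path length inside a single IRF tree --- is likewise not the paper's; there the variance enters only through $K$. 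On the positive side, your per-node accounting is more scrupulous than the paper's, which tacitly treats each of the $N$ instances of \textsc{Proposition}~\ref{Thm Well posedness of iTree} as unit cost; but to recover the theorem as stated you would need either an $\mathcal{O}(n)$ bound per one-dimensional tree, or to let the $\log n$ come from $K$ as the paper does.
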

\begin{proof}
	\begin{enumerate}[(i)]
		\item Combining \eqref{Eq Estimate Number of Bernoullin Trials} and \eqref{Eq Variance Logarithmic Model} with \textsc{Proposition} \ref{Thm Well posedness of iTree} the result follows.
		
		\item The hypothesis on the IRF variance implies that \eqref{Eq Estimate Number of Bernoullin Trials} and \eqref{Eq Variance Logarithmic Model} are valid. Therefore, the previous reasoning applies and the proof is complete.
	\end{enumerate} 
\end{proof}
In the table \ref{Tb IRF vs DIRF Schemes} below we present a qualitative comparison between the IRF method and the proposed DIRF method.
\begin{table}[h!]
	\begin{center}
		\begin{tabular}{ c c c }
			\hline
			Feature & IRF & DIRF \\
			\hline
			Mathematical Justification & Partial & Full \\
			Probabilistic Space Induced by the Method & Known & Known \\
			Convergence of the Method & Known & Known \\
			Necessary steps to isolate all data & $ N $ & $ N $ \\
			Number of Trials for Confidence Interval & Partially Known	& Known \\			
			Computational Complexity & $ \mathcal{O}(N \log N) $ &	$ \mathcal{O}(N \log N) $ \\
			Direction of Data Separation for an iTree realization& Variable within iTree	& Fixed within iTree \\
			Principal Components Analysis & Not necessary & Necessary \\
			\hline
		\end{tabular}
	\end{center}
	\caption{IRF vs. DIRF Methods. It is understood that both methods are acting on the same set $ S = \{ \x_{1}, \ldots, \x_{N} \} \subseteq \R^{d} $, satisfying Hypothesis \ref{Hyp Non Repeated Coordinates}.}
	\label{Tb IRF vs DIRF Schemes}
\end{table}
%
%
%
%
\section{Numerical Experiments.}\label{Sec Numerical Experiments}
%
%
%
The present section is devoted to the design and execution of numerical experiments in order to compare the performance of both methods: IRF and DIRF. The following aspects are important in this respect
\begin{enumerate}[(i)]
	\item The codes are implemented in python, some of the used libraries are pandas, scipy, numpy and matplotlib. 
	
	\item Although the experiments use benchmarks already labeled, we also use the distance-based definition of outlier, introduced in \cite{AngiulliPizzuti}:
	\begin{definition}\label{Def Outlier}
		Let $ r > 0 $ and $ 0 \leq  p \leq 1  $ be two fixed parameters and $ S \subseteq \R^{d} $ be a set. A point $ \x \in S $ is said to be an outlier with respect to the parameters $ r $ and $ p $ if
		\begin{equation}\label{Eq Outlier}
		\frac{\# \big( B (\x, r ) \cap S \big) }{ \# S } \leq p .
		\end{equation}
		Here $ B(\x, r) \defining \big\{\z \in \R^{d}: \Vert \x - \z \Vert \leq r  \big\} $, with $ \Vert \cdot \Vert $ the Euclidean norm.
	\end{definition}
	\item The number of sampled trees $ K $, inside the random forest (or Bernoulli trials) is computed combining \eqref{Eq Estimate Number of Bernoullin Trials} and \eqref{Eq Variance Logarithmic Model}.
	
	\item It is not our intention to debate the definition of an anomaly classifying threshold here. Therefore, our analysis runs through several quantiles acting as \textbf{anomaly threshold criteria}, which we adopt empirically based on observations of each case/example. That is, we state that $ \x $ is an anomaly if for the IRF method, the averaged height $ \uph_{\text{IRF}}(\x) $ belongs to the lowest 1\%, 2\%, 3\% (and so forth) of the set $ \{ \uph_{\text{IRF}}(\y) : \y \in S \} $. The analogous definition holds for the DIRF method.
	
	\item In both examples we use the PCA Method (Principal Components Analysis, see \cite{BishopDataScience}) because of the high dimension of the original data. On one hand, PCA is part of the DIRF method and on the other hand the IRF method works better when PCA is applied, hence we use it for both methods in the experiments in order to make them comparable. A different set of principal components is used in each example. These sets were chosen empirically according to the eigenvalues' order of magnitude, for better illustration of both methods (IRF and DIRF). Moreover, beyond the higher number of components both methods severely deteriorate due to the noise introduced by the lower order components.
	
	\item Our study will analyze, not only anomalies correctly detected but also the performance of the method against false positives. In practice, both methods IRF and DIRF need a threshold, under which all the values are declared anomalies by the method. Such procedure will include a number of false positives which we also quantify in our examples.
\end{enumerate}
%
%
%
\begin{figure}[h!]
	\centering
	\begin{subfigure}[IRF Method. Number of Components vs Percentage of Anomaly Detection. Original Labeling.]
		{
			\includegraphics[scale = 0.48]{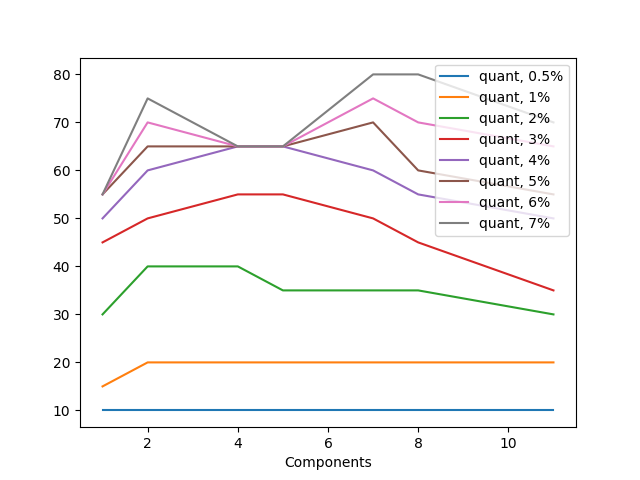}
		}
	\end{subfigure}
	%
	%
	\begin{subfigure}[IRF Method. Number of Components vs Percentage of False Positives. Original Labeling.]
		{
			\includegraphics[scale = 0.48]{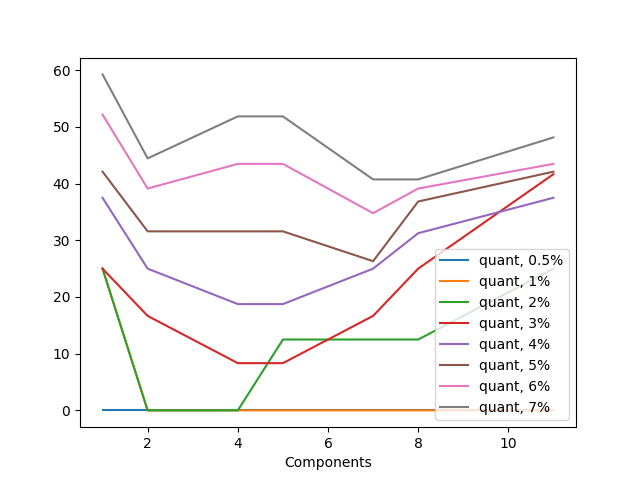}
		}
	\end{subfigure}
	\begin{subfigure}[DIRF Method. Number of Components vs Percentage of Anomaly Detection. Original Labeling.]
		{
			\includegraphics[scale = 0.48]{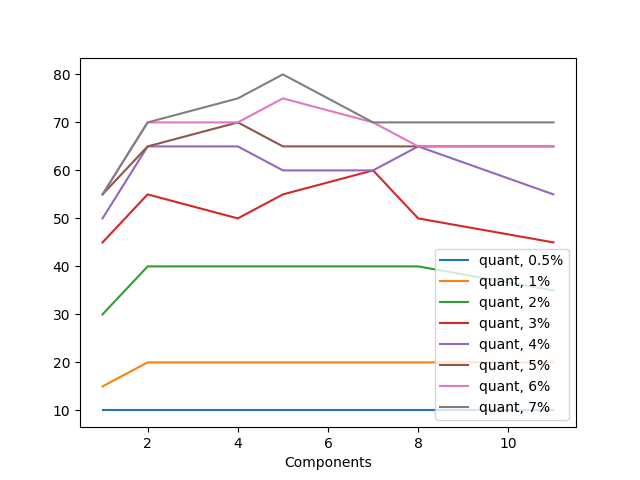}
		}
	\end{subfigure}
	%
	%
	\begin{subfigure}[DIRF Method. Number of Components vs Percentage of False Positives. Original Labeling.]
		{
			\includegraphics[scale = 0.48]{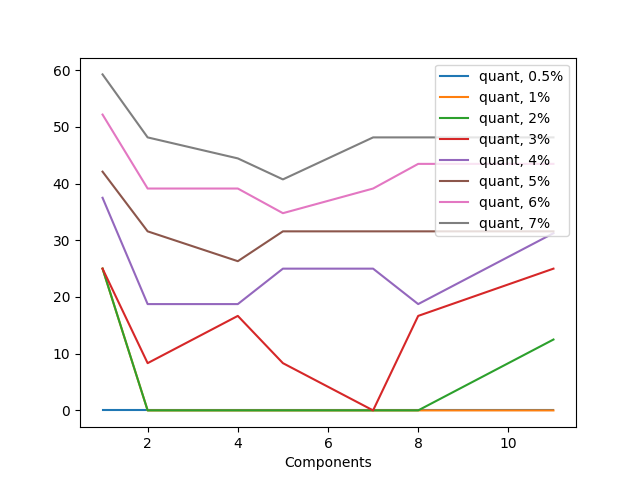}
		}
	\end{subfigure}
	\caption{Anomaly detection percentages for Example \ref{Exm Benchmark}, Breast Cancer Diagnosis. All the graphics have the number of principal components in the $ x $-axis and multiple curves for the quantiles to be used as a threshold. Figure (a) and (c) depict anomalies detected by both methods with the original labeling, while figures (b) and (d) display false positives introduced by both methods with the original labeling.
	}
	\label{Fig IRF-DIRF Labeled Example Breast Results}
\end{figure}
\begin{example}\label{Exm Benchmark}
	The first example uses the benchmark ``Breast Cancer Wisconsin (Diagnosis) Data Set", downloaded from \url{https://www.kaggle.com/uciml/breast-cancer-wisconsin-data}. Although the original data base contains 569 individuals, 213 patients (37.2\%) were diagnosed with cancer. It is clear that the patients diagnosed with cancer can not be considered anomalies if the full data base is used for the analysis. Therefore the original data set was modified: the subset of healthy patients was left intact and 20 randomly chosen patients with cancer (3.5\%) were chosen to complete the set. 
	
	Two labels were used, the diagnosis label coming from the original data set itself and a distance-based label computed according to \textsc{Definition} \ref{Def Outlier} with parameters $ r = 350 $, $ p = 0.05  $.  The number of sampled trees (Bernoulli trials) is given by $ K = 2250 $. The original dataset contains 32 columns, therefore we combine our technique with the PCA method (Principal Components Analysis); in this particular example we choose the 1, 2, 4, 5, 7, 8 and 11 first components. Our experiments show that both methods severely decay their quality from 11 components on, in particular both perform really poorly with the 32 components to be considered a viable option. Finally, the anomaly detection threshold quantiles are 0.5, 1, 2, 3, 4, 5, 6 and 7, which were chosen from observing the behavior of this particular case.	
	
	%
	%
	\begin{table}[h!]
		\scriptsize
		\begin{center}
			%
			\begin{tabular}{c c c  c c  c c  c c  c c  c c  c c  c c }
				\hline
				quantile [\%] 
				&\multicolumn{2}{ c }{0.5}&
				\multicolumn{2}{ c }{1}& \multicolumn{2}{ c }{2} & \multicolumn{2}{ c }{3}&
				\multicolumn{2}{ c }{4}& \multicolumn{2}{ c }{5}& \multicolumn{2}{ c }{6}&
				\multicolumn{2}{ c }{7} \\
				components & A & F & A & F & A & F & A & F & A & F & A & F & A & F & A & F \\
				\hline
				1 & 0.0 & 0.0 & 0.0 & 0.0 & 0.0 & 0.0 & 0.0 & 0.0 & 0.0 & 0.0 & 0.0 & 0.0 & 0.0 & 0.0 & 0.0 & 0.0 \\
				2 & 0.0 & 0.0 & 0.0 & 0.0 & 0.0 & 0.0 & -5.0 & 8.3 & -5.0 & 6.3 & 0.0 & 0.0 & 0.0 & 0.0 & 5.0 & -3.7 \\
				4 & 0.0 & 0.0 & 0.0 & 0.0 & 0.0 & 0.0 & 5.0 & -8.3 & 0.0 & 0.0 &-5.0 & 5.3 &-5.0 & 4.3 & -10.0 & 7.4 \\
				5 & 0.0 & 0.0 & 0.0 & 0.0 & -5.0 & 12.5 & 0.0 & 0.0 & 5.0 & -6.3 & 0.0 & 0.0&-10.0 & 8.7 &-15.0 & 11.1 \\
				7 & 0.0 & 0.0 & 0.0 & 0.0 & -5.0 & 12.5 & -10.0 & 16.7 & 0.0 & 0.0 & 5.0 & -5.3 & 5.0 & -4.3 & 10.0 & -7.4 \\ 
				8 & 0.0 & 0.0 & 0.0 & 0.0 & -5.0 & 12.5 &-5.0& 8.3 &-10.0 & 12.5 &-5.0 & 5.3 & 5.0 & -4.3 & 10.0 & -7.4\\
				11 & 0.0 & 0.0 & 0.0 & 0.0 & -5.0 & 12.5 & -10.0 & 16.7 & -5.0 & 6.3 & -10.0 & 10.5 & 0.0 & 0.0 & 0.0 & 0.0 \\
				\hline
			\end{tabular}
		\end{center}
		\normalsize
		\caption{Table of differences IRF -- DIRF, Breast Cancer Diagnosis, \textsc{Example} \ref{Exm Benchmark}. All the values are the difference of percentages. The columns ``A" and ``F" stand for anomalies and false positives respectively.}
		\label{Tb IRF - DIRF Methods Anomalies and False Positives}
	\end{table}

	The table \ref{Tb IRF - DIRF Methods Anomalies and False Positives} reports the difference of achievements attained by both methods when subtracting the DIRF from IRF. The predominance of negative and positive values in the columns ``A" and ``F" of the table \ref{Tb IRF - DIRF Methods Anomalies and False Positives} respectively, shows that the DIRF method performs better than the IRF method. Specially in the detection of false positives where DIRF performs significantly better than IRF: the former method presents convex curves, while the latter shows concave (or pseudo-convex) curves (see Figure \ref{Fig IRF-DIRF Labeled Example Breast Results}). 

	Observe that the use of the quantiles is ``dual" in the following sense. It is clear that all the curves tend to shift upwards when the quantile is amplified. This is good from the anomaly detection point of view but bad from the false positives inclusion point of view and it is hardly surprising: the larger the threshold, the more likely we are to detect more anomalies, but also the higher the price of including false positives. For our particular example using a quantile of 4\% seems to be the ``balanced choice".
	
	
	It must be observed that the quality of DIRF deteriorates with respect to IRF as we move along the diagonal of the table \ref{Tb IRF - DIRF Methods Anomalies and False Positives}, in particular DIRF performs poorly with respect to IRF from 7 PCA components and from the 6\%  quantile on.
	
	The same experiments were performed when using the artificial distance-based labeling introduced in \textsc{Definition} \ref{Def Outlier}.  It can be observed that both methods perform better for the anomaly detection, which is not unexpected because the DIRF method is strongly related to a distance function for anomalies, as shown in \textsc{Theorem} \ref{Thm Bound Quality}. However, both methods perform worse form the false positives inclusion point of view. Finally, the DIRF method performs better than the IRF method, although its superiority in the false positives inclusion is not as remarkable as in the first case.
\end{example}
\begin{figure}[h!]
	\centering
	\begin{subfigure}[IRF Method Number of Components vs Percentage of Anomaly Detection. Original Labeling.]
		{
			\includegraphics[scale = 0.48]{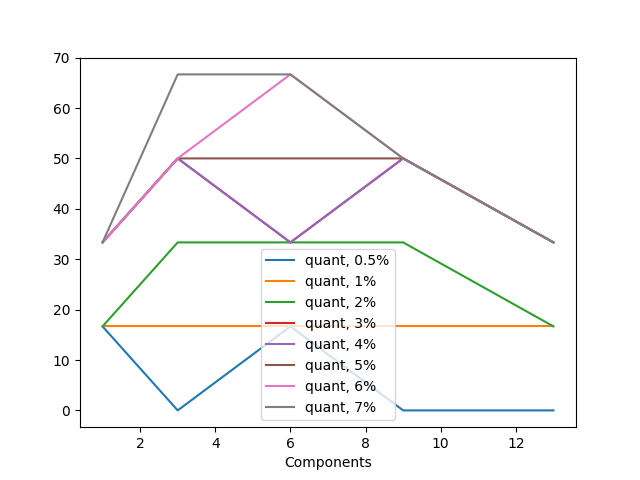}
		}
	\end{subfigure}
	%
	%
	\begin{subfigure}[IRF Method. Number of Components vs Percentage of False Positives. Original Labeling.]
		{
			\includegraphics[scale = 0.48]{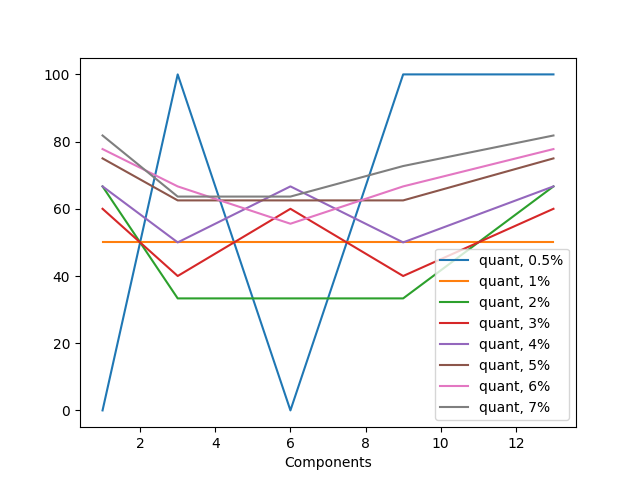}
		}
	\end{subfigure}
	\begin{subfigure}[DIRF Method. Number of Components vs Percentage of Anomaly Detection. Original Labeling.]
		{
			\includegraphics[scale = 0.48]{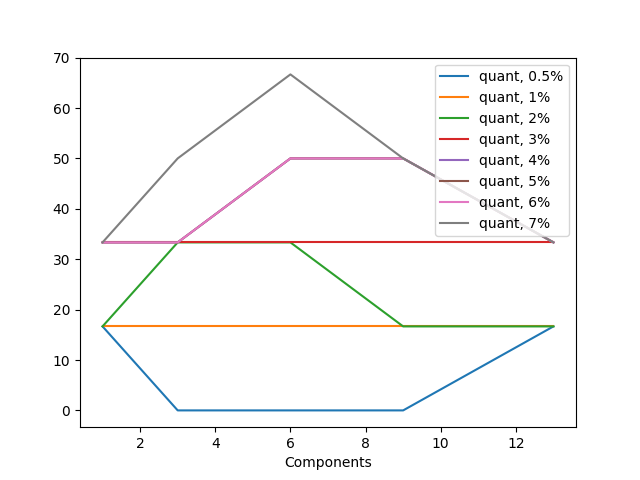}		}
	\end{subfigure}
	%
	%
	\begin{subfigure}[DIRF Method. Number of Components vs Percentage of False Positives. Original Labeling.png]
		{
			\includegraphics[scale = 0.48]{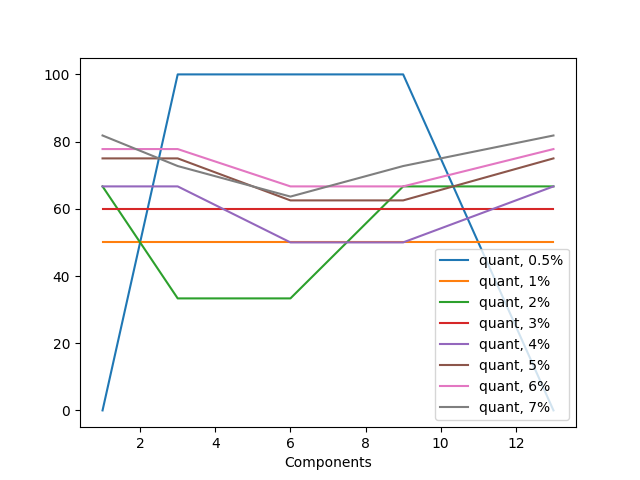}
		}
	\end{subfigure}
	\caption{Anomaly detection percentages for Example \ref{Exm Benchmark Lympho}, Lymphoma Diagnosis. All the graphics have the number of principal components in the $ x $-axis and multiple curves for the quantiles to be used as a threshold. Figure (a) and (c) depict anomalies detected by both methods with the original labeling, while figures (b) and (d) display false positives introduced by both methods with the original labeling.
	}
	\label{Fig IRF-DIRF Labeled Example Lympho Results}
\end{figure}
\begin{example}\label{Exm Benchmark Lympho}
	The second example uses a benchmark of lymphoma diagnosis, downloaded from \url{www.kaggle.com}. 
	The dataset consists of 148 patients, with only 6 of them diagnosed having cancer, i.e. 4\%.
	
	Two labels were used, the diagnosis label coming from the original data set itself and a distance-based label computed according to \textsc{Definition} \ref{Def Outlier} with parameters $ r = 300 $, $ p = 0.05  $.  The number of sampled trees (Bernoulli trials) is given by $ K = 1800 $. The original dataset containes 18 columns, in contrast with \textsc{Example} \ref{Exm Benchmark}, the application of PCA yields eigenvalues whose order of magnitude does not change as abruptly. Therefore, we work with the 1, 3, 6, 9 and 13 first components. Our experiments show that none of the methods has a good performance for any number of components and its quality decays even more from 6 components on (due to the noise introduced by the lower order components). Finally, the anomaly detection threshold  quantiles are 0.5, 1, 2, 3, 4, 5, 6 and 7. These were chosen from observing the behavior of this particular case.

	\begin{table}[h!]
		\scriptsize
		\begin{center}
			%
			\begin{tabular}{c  c c  c c  c c  c c  c c  c c  c c  c c }
				\hline
				quantile [\%] 
				&\multicolumn{2}{ c }{0.5} &
				\multicolumn{2}{ c }{1} & \multicolumn{2}{ c }{2} & \multicolumn{2}{ c }{3}&
				\multicolumn{2}{ c }{4}& \multicolumn{2}{ c }{5}& \multicolumn{2}{ c }{6}&
				\multicolumn{2}{ c }{7} \\
				components & A & F & A & F & A & F & A & F & A & F & A & F & A & F & A & F \\
				\hline
				1.0&0.0&0.0&0.0&0.0&0.0&0.0&0.0&0.0&0.0&0.0&0.0&0.0&0.0&0.0&0.0&0.0 \\
				3.0&0.0&0.0&0.0&0.0&0.0&0.0&16.7&-20.0&16.7&-16.7&16.7&-12.5&16.7&-11.1&16.7&-9.1 \\
				6.0&16.7&-100.0&0.0&0.0&0.0&0.0&0.0&0.0&-16.7&16.7&0.0&0.0&16.7&-11.1&0.0&0.0 \\
				9.0&0.0&0.0&0.0&0.0&16.7&-33.3&16.7&-20.0&0.0&0.0&0.0&0.0&0.0&0.0&0.0&0.0 \\
				13.0&-16.7&100.0&0.0&0.0&0.0&0.0&0.0&0.0&0.0&0.0&0.0&0.0&0.0&0.0&0.0&0.0 \\
				\hline
			\end{tabular}
		\end{center}
		\normalsize
		\caption{Table of differences IRF -- DIRF Lymphoma Diagnosis, \textsc{Example} \ref{Exm Benchmark Lympho}. All the values are the difference of percentages. The columns ``A" and ``F" stand for anomalies and false positives respectively.}\label{Tb IRF - DIRF Methods Anomalies and False Positives Lympho}
	\end{table}

	The table \ref{Tb IRF - DIRF Methods Anomalies and False Positives Lympho} reports the difference of achievements attained by both methods when subtracting the DIRF from IRF. Contrary to the previous example, there is a predominance of positive and negative values in the columns ``A" and ``F" of the table \ref{Tb IRF - DIRF Methods Anomalies and False Positives Lympho} respectively, showing that the IRF method performs better than the DIRF method with some few exceptions. This is also displayed in \textsc{Figure} \ref{Fig IRF-DIRF Labeled Example Lympho Results}. Nevertheless, we observe that the difference between methods is marginal. 
	
	As in the previous example, the 4\% quantile seems to be the ``balanced choice". In particular DIRF and IRF perform identically from 9 PCA components on and from the 6\%  quantile on.
	
	The same experiments were performed but using the artificial distance-based labeling introduced in \textsc{Definition} \ref{Def Outlier}. In this case, both methods perform almost identically and worse than in the case of the original labeling.
\end{example}
%
%
%
%
%
%
%
%
%
%
%
%
\section{Conclusions and Final Discussion.}\label{Sec Conclusion}
%
%
The present work yields several conclusions listed below.
\begin{enumerate}[(i)]
	\item The IRF anomaly detection method introduced in \cite{LiuIRF, Liu2012IsolationBasedAD} has been \textbf{mathematically analyzed}. The \textbf{well-posednes}s (\textsc{Proposition} \ref{Thm Well posedness of iTree}) and the \textbf{convergence} of the algorithm (\textsc{Corollary} \ref{Thm the IRF Method Converges}) have been established as well as the \textbf{cardinality} (size) of the probabilistic space (\textsc{Theorem} \ref{Thm Cardinal of RIF}). 
	
	\item Under \textbf{mild assumptions}, it has been proved (see \textsc{Theorem} \ref{Thm Computational Costs}) that the \textbf{computational cost} of IRF method has order of magnitude $ \mathcal{O}(N \log N) $, as claimed in \cite{LiuIRF, Liu2012IsolationBasedAD}.
	
	\item It has been shown that although the IRF method is well-defined, convergent and its target values $ \{\Exp(\uph_{\x}): \x \in S\} $ are \textbf{inconclusive} when used as parameters for anomaly detection. (As shown in \textsc{Lemma} \ref{Thm Inconclusiveness in 2D} and \textsc{Theorem} \ref{Thm Inconclusiveness IRF Method} an outlier $ \x $ and a cluster point $ \y $ can get the same expected height $ \Exp(\uph_{x}) = \Exp(\uph_{y}) $.) Moreover, the IRF method can be deeply analyzed in the 1D case (\textsc{Theorem} \ref{Thm Expectated Heights 1D Data}) and it has been shown, from the theoretical point of view, that although it unquestionably detects outliers in this setting, its relationship with a notion of topological distance is \textbf{not certain} (see \textsc{Theorem} \ref{Thm Bound Quality}).
	
	\item Taking advantage of the \textbf{tractability} of IRF for the 1D case, we have given theoretical estimates of the variance (\textsc{Equation} \eqref{Eq Variance Logarithmic Model}) and derived the size of the sampling space (\textsc{Equation} \eqref{Eq Estimate Number of Bernoullin Trials}). This is necessary to guarantee confidence intervals for the empirically computed values of the expected heights. We have suggested a \textbf{modification} of the method in \textsc{Section} \ref{Sec Correction of the IRF Method}, named DIRF method (Directional Isolation Random Forest), whose differences with respect to the IRF method ar summarized in Table \ref{Tb IRF vs DIRF Schemes}.
\end{enumerate}
From the numerical examples in \textsc{Section} \ref{Sec Numerical Experiments}
\begin{enumerate}[(i)]
	\setcounter{enumi}{4}
	\item 
	There is definitely \textbf{correlation} bewteen the heights computed by the methods (IRF and DIRF), but it could be strong as in \textsc{Example} \ref{Exm Benchmark} or weak as in \textsc{Example} \ref{Exm Benchmark Lympho}. 
	
	\item It is \textbf{clear} that the DIRF method is fully justified from the mathematical point of view which is desirable, however its relationship with a notion of distance is not certain (see the extent of \textsc{Theorem} \ref{Thm Bound Quality}) for multiple dimensions. In the examples we have seen IRF can perform better than DIRF, with a marginal difference. This is to be subject of extensive empirical evaluation in future work. 
	
	\item Both numerical examples \textbf{may suggest} that the adequate number of PCA components to introduce in the IRF and DIRF methods is \textbf{one third} of its total number of dimensions. Yet again, two experiments do not furnish enough numerical evidence to support such a conjecture, ergo this aspect needs to be further studied.
\end{enumerate}
As for future work,
\begin{enumerate}[(i)]
	\setcounter{enumi}{6}	
	\item Although we have proved that the IRF method is inconclusive as a means to classify anomalies, experience shows that it can provide satisfactory results in practice, it follows that the method is \textbf{correlated} with anomalies. There are two possible approaches for enhancing the method:
	\begin{enumerate}[a.]
		\item Find sufficient conditions for the data combinatorial configuration, to assure a quality certificate of IRF.
		
		\item Look for additional statistical parameters for anomaly detection (with computational cost no bigger than $ \mathcal{O}(n \log n) $) to complement/contrast the information furnished by IRF.
	\end{enumerate}
	Both research lines will be explored in future work.
	
	\item  The mathematical analysis the methods presented in \cite{LiuTingZhouSciForest} and \cite{HaririKindBrunner} will be pursued in future work, given that the performance of both is substantially superior to the original IRF. However, the probabilistic analysis will necessarily be different, because both references use random oblique hyperplanes for data separation, rather that axis-parallel as in IRF. This fact changes drastically the sampling space, from finite eligible directions in IRF, to uncountably many possible directions for the enhanced methods.
	
\end{enumerate}
%
%
%
\section*{Acknowledgements}
The first Author wishes to thank Universidad Nacional de Colombia, Sede Medell\'in for supporting the production of this work through the project Hermes 54748 as well as granting access to Gauss Server, financed by ``Proyecto Plan 150x150 Fomento de la cultura de evaluaci\'on continua a trav\'es del apoyo a planes de mejoramiento de los programas curriculares" (\url{gauss.medellin.unal.edu.co}), where the numerical experiments were executed. Special thanks to Mr. Jorge Humberto Moreno C\'ordoba, our former student, who introduced us to the IRF method. 

The Authors wish to acknowledge the anonymous referees whose deep insight and kind suggestions, decisively enhanced the quality of this work.

%

\end{document}